\documentclass[10pt,a4paper,openright]{scrartcl}
\KOMAoption{captions}{tableheading}
\usepackage{subfig}
\usepackage[english]{babel}
\usepackage{amsmath, amsthm, amssymb}
\usepackage{bbm}
\usepackage{url}
\usepackage{tikz-cd}
\usepackage{braket}
\usepackage{mathrsfs}
\usepackage{path}
\usepackage{nicefrac}
\usepackage{cite}
\usepackage{cancel}
\usepackage{indentfirst}
\usepackage{appendix}
\usepackage[utf8]{inputenc}
\usepackage{geometry}
\usepackage{verbatim}
\usepackage{nameref}
\usepackage{alltt}
\usepackage[colorlinks=true, linkcolor=violet, citecolor=violet, urlcolor=violet]{hyperref}
\usepackage{pgfplots}
\usepackage{dsfont}
\usepackage[font=small,labelfont=bf]{caption} 

\newtheorem{prop}{Proposition}
\newtheorem{Theo}{Theorem}

\usepackage{etoolbox}
\theoremstyle{definition}
\newtheorem{rem}{Remark}
\AtEndEnvironment{rem}{\hfill$\lozenge$}

\newcommand\be{\begin{equation}}
\newcommand\ee{\end{equation}}

\usepackage{color}

\areaset[current]{500pt}{680pt}
\addtokomafont{disposition}{\normalfont\bfseries}

\numberwithin{equation}{section}
\begin{document}

\ 
\bigskip

\thispagestyle{empty}
\begin{center}
	\Large{\textbf{Haantjes algebras, Zernike system and separation of variables}}
\end{center}
\vskip 0.5cm
\begin{center}
	\textsc{Ond\v{r}ej Kub\r{u}$^{1,*}$ and Danilo Latini$^{2,3,\star}$}
\end{center}
\begin{center}
	$^1$ Instituto de Ciencias Matem\'aticas, C/ Nicol\'as Cabrera, No 13–15, 28049 Madrid,
	Spain
\end{center}
\begin{center}	
	$^2$ Universit\`a degli Studi di Milano, Dipartimento di Matematica ``Federigo Enriques", \\
	Via Cesare Saldini 50, 20133 Milano, Italy
\end{center}
\begin{center}
	$^3$ INFN Sezione di Milano, Via Giovanni Celoria 16, 20133 Milano, Italy 
\end{center}
\begin{center}
	\footnotesize{
		$^*$\textsf{ondrej.kubu@icmat.es}  \hskip 0.25cm $^\star$\textsf{danilo.latini@unimi.it} }
\end{center}
\vskip 0.75cm
\begin{center}
	\textbf{Abstract}
\end{center}
\begin{abstract}

\noindent The generalized Zernike family $H_{(N)} = p_1^2 + p_2^2 +
\sum_{n=1}^N \gamma_n\,(q_1 p_1 + q_2 p_2)^n$ is a parametric family
of two-dimensional superintegrable Hamiltonians, admitting $N$ integrals
of motion of degree $N$ in the momenta. A theorem of Nozaleda, Tempesta,
and Tondo guarantees that canonical separation coordinates
(Darboux--Haantjes coordinates) exist for any such system; the challenge
is to construct them explicitly. This paper solves the problem for
$N = 2$ --- the classical Zernike system, which is canonically equivalent
to the isotropic harmonic oscillator on flat space or on a space of
constant curvature --- covering all four known separation types: polar,
two Cartesian-type, and elliptic.

The key structural fact is that the Haantjes operators associated with
all integrals of $H_{(2)}$ have no momentum-dependent off-diagonal block
(lift form). We prove that this implies the separation coordinates are
reachable by an extended point transformation: the new positions depend
only on the old positions, with no momentum entering the coordinate
change. In the polar and Cartesian-type cases the new position
coordinates involve at most a square root of a single-variable rational
function; in the elliptic case they are given by the two roots of a
quadratic polynomial in the original coordinates, and the resulting
branch structure introduces a fourth regular singular point in the
quantum separated ODE, placing it in the Heun class, in agreement with
results of Atakishiyev, Pogosyan, Vicent, Wolf, and Yakhno.

For $N \geq 3$ we prove an obstruction: no lift-form Haantjes operator
can generate an integral independent of the angular momentum. The
separation coordinates for higher Zernike Hamiltonians therefore require momentum-dependent canonical transformations, whose explicit
construction is the subject of future work.
\end{abstract}
\vskip 0.35cm
\hrule

\bigskip
\noindent
\textbf{Keywords}:  Zernike system; superintegrability;  Haantjes algebras; Nijenhuis tensors; separation of variables.
\medskip

\noindent 
\textbf{PACS}: 02.30.Ik, 03.65.Fd, 02.40.-k, 45.20.-d, 02.40.Hw, 02.20.Sv, 02.40.Ky
\medskip

\noindent
\textbf{MSC}:  37J35 (Primary); 70H06, 70H20, 22E60, 70G65 (Secondary) 
\bigskip
\hrule

\tableofcontents


\section{Introduction}
\label{intro}

\noindent Superintegrable Hamiltonian systems, i.e. systems with more integrals of motion than degrees of freedom, are well understood in the case of second-order polynomial integrals: in two and three dimensions this class is classified \cite{MPW, KKMbook}. Far less is known about systems with higher-order integrals, and the situation is especially rich — and open — for families of Hamiltonians where the degree of the integrals grows with a parameter. The Tremblay–Turbiner–Winternitz (TTW) \cite{TTW09} and Post–Winternitz (PW) \cite{PW10} families, where the degree depends on a rational parameter $k$, are the most studied examples.

The focus of our article is the \emph{generalized Zernike family}
\begin{equation}
\label{eq:HN}
H_{(N)} := p_1^2 + p_2^2 + \sum_{n=1}^N \gamma_n\,(q_1 p_1 + q_2 p_2)^n, \qquad N \geq 1,
\end{equation}
where each coefficient $\gamma_i$ is real or purely imaginary (the two types may differ between different $i$). This is another interesting example: it is superintegrable for every $N$ and every choice of parameters $\gamma_n$, and its integrals of motion are polynomial of degree growing with $N$ \cite{AGSH}. The case $N = 1$ is the (complex) isotropic harmonic oscillator, $N = 2$ is the \emph{classical Zernike Hamiltonian} introduced by Pogosyan, Wolf, and Yakhno \cite{PWY2017} as the classical counterpart of a differential operator studied by Zernike \cite{Zernike} in the context of optical diffraction theory, and $N \geq 3$ gives a hierarchy of higher-order superintegrable systems.

The superintegrability of $H_{(N)}$ was established in the polynomial case by Blasco, Gutierrez-Sagredo, and Herranz \cite{AGSH}, and for arbitrary analytic $F(\mathbf{q}\cdot\mathbf{p})$ by C. Gonera, J. Gonera, and Kosiński \cite{Gonera22}, who also showed that the $n$-dimensional generalisation $H = \mathbf{p}^2 + F(\mathbf{q}\cdot\mathbf{p})$ is maximally superintegrable in $\mathbb{R}^n$. For $N = 2$, the system admits the angular momentum $\mathcal{J} = q_1 p_2 - q_2 p_1$ and two quadratic integrals $I_1$, $I_2$ (see Proposition~\ref{prop:myprop}), which together close in a \emph{cubic Higgs-type algebra} \cite{PWY2017, PSAWY2017}; for general $N$ the symmetry algebra is a polynomial algebra of degree $2N - 1$ \cite{AGSH, bulg}. A further structural feature of $H_{(2)}$ is its equivalence with the isotropic harmonic oscillator on spaces of constant curvature: through a canonical transformation to geodesic polar coordinates \cite{AGSH, ANY26}, $H_{(2)}$ becomes the Higgs oscillator \cite{Higgs79} on $\mathbb{S}^2$ ($\kappa := -\gamma_2 > 0$), the flat oscillator on $\mathbb{E}^2$ ($\kappa = 0$), or its hyperbolic analogue on $\mathbb{H}^2$ ($\kappa < 0$); see \cite{Fordy2018} for an overview.

Hamilton--Jacobi separation of variables for $H_{(2)}$ has been studied in several coordinate systems. Pogosyan, Wolf, and Yakhno \cite{PWY2017} identified separations in polar, equidistant, and elliptic coordinates via vertical projection from $\mathbb{S}^2$ and $\mathbb{H}^2$. The quantum system admits analogous separations \cite{PSAWY2017, Campoamor25}, with solutions in terms of Legendre, Gegenbauer, and Jacobi polynomials \cite{Pog17, Ataki17}; interbasis expansion coefficients and connections to hyperspherical harmonics on $\mathbb{S}^3$ were computed in \cite{Ataki17, Ataki19}, and elliptic separations on the half-sphere --- yielding Heun-type equations --- were studied in \cite{Ataki18, Ataki19b}. All these results relied on the geometric identification of $H_{(2)}$ with the Higgs oscillator on spaces of constant curvature — already implicit in Zernike's original optical context — which guided the construction of separation coordinates via sphere and hyperboloid projections. For $N \geq 3$, no analogous curved-space model is known, and this geometric route is unavailable \cite{AGSH, Gonera22}.

The standard framework for separation of variables, the Stäckel--Benenti theory \cite{BenentiFrancaviglia1980, Benenti2016}, operates on configuration space and is intrinsically adapted to second-order integrals: it characterises separability through Killing tensors on a Riemannian manifold. For the family $H_{(N)}$ with $N \geq 3$, whose integrals grow in polynomial degree with $N$, this theory provides no systematic handle.

The theory of \emph{Haantjes algebras} \cite{haantjes1955, tempesta2021haantjes, tempesta2022haantjes, nozaleda2022classical} offers a different starting point, generalizing the $\omega N$-manifold theory of Magri and Morosi \cite{magri1984} and the quasi-bi-Hamiltonian approach of Morosi and Tondo \cite{morosi1997, tondo1999}: it acts on the full phase space and handles integrals of arbitrary degree, or (in principle) even non-polynomial ones. Building on Haantjes' classical result \cite{haantjes1955}, Reyes Nozaleda, Tempesta, and Tondo \cite{tempesta2021haantjes,tempesta2022haantjes,nozaleda2022classical} established a separability criterion in terms of a semisimple Abelian $\omega\mathscr{H}$ structure and gave a constructive algorithm — the Darboux–Haantjes (DH) algorithm — that produces separation coordinates based on eigendistributions of the Haantjes operators; Haantjes chains are the Haantjes-algebraic generalisation of Lenard chains \cite{magri2003}. The Drach--Holt system, which is integrable due to a cubic integral \cite{Campoamor13}, serves as a proof of concept for the higher-order case \cite{nozaleda2022classical}. In this paper we apply this framework to $H_{(2)}$, as a first step toward the full family $H_{(N)}$, $N \geq 3$.

The central question addressed in this paper is: \emph{for which $N$ can the DH separation coordinates be reached by an extended point transformation (EPT)}, i.e.\ a canonical transformation in which the new positions depend only on the old positions? The answer partitions the generalized Zernike family into two regimes:

\begin{itemize}
\item \textbf{$N \leq 2$ (EPT regime).}
For $N=1$ (harmonic oscillator) separation in Cartesian coordinates is trivially an EPT.  For $N=2$, we show in Section~\ref{sec4} that all four separations of $H_{(2)}$ can be reached by EPTs: the polar separation associated with $\mathcal{J}$ (which holds for the entire family $\{H_{(N)}\}_{N\geq 1}$ simultaneously), the two Cartesian-type separations associated with $I_1$ and $I_2$, and the elliptic separation associated with $I_e = -\gamma_2 k_1^2 \mathcal{J}^2 - k_2^2 I_2$.  In each case, Proposition~\ref{prop:a_block} shows that the DH coordinates are determined by the position-space block of the Haantjes operator alone.  The Cartesian and elliptic coordinates coincide with the System~II/HII and elliptic coordinates of Pogosyan, Wolf, and Yakhno \cite{PWY2017}, here derived by purely algebraic means.

\item \textbf{$N \geq 3$ (beyond EPT).}
Theorem~\ref{thm:no_ept} in Section~\ref{sec:no_ept} establishes that for $N \geq 3$ with $\gamma_k \neq 0$ for some $k\geq 3$, any EPT that separates $H_{(N)}$ must be a transformation to polar coordinates associated to $\mathcal{J}^2$; in particular, the Haantjes operators corresponding to $I_1$ and $I_2$ cannot be diagonalised by any EPT.  The Jacobi--Haantjes theorem from \cite{nozaleda2022classical} nevertheless guarantees the existence of DH separation coordinates via some canonical transformation; their explicit construction uses a priori momentum-dependent transformations mixing coordinates and momenta, and is the main open problem motivating the present work.
\end{itemize}

\section{The classical superintegrable Zernike system}
\label{sec2}

\subsection{The model, the integrals of motion and the symmetry algebra}
\label{sec2.1}

\noindent In this work, we are interested in the Hamiltonian
\begin{equation}
H_{(2)}:=p_1^2+p_2^2+\gamma_1 (q_1 p_1 + q_2 p_2)+\gamma_2 (q_1 p_1 + q_2 p_2)^2
\label{eq:hamZK}
\end{equation}
where $(q_i,p_i)$ denote pairs of canonical coordinates and the subscript $(2)$ in $H_{(2)}$ indicates the highest power of $q_1 p_1 + q_2 p_2$ appearing in the Hamiltonian. More precisely, for $\gamma_1=-\imath\,\beta$ and $\gamma_2=\alpha$, the system reduces to the \emph{proper} Zernike Hamiltonian $H_{\textsf{Zk}}$, introduced in \cite{PWY2017} as the classical counterpart of the Zernike operator \cite{Zernike}, and studied in detail in \cite{PSAWY2017} in the quantum setting. The superintegrability property of the classical Hamiltonian above is characterized through the following:
\begin{prop}[\cite{PWY2017}]\label{prop:myprop}
For any values of the parameter $\gamma_1$, $\gamma_2$, the Hamiltonian \eqref{eq:hamZK} Poisson commutes with the angular momentum $\mathcal{J}=q_1 p_2 -q_2 p_1$ and with the two additional functionally independent constants of motion
\begin{equation}
I_1=\left(1+\gamma_2 (q_1^2+q_2^2)\right)p_1^2+\gamma_1 q_1 p_1 \, , \qquad I_2=\left(1+\gamma_2 (q_1^2+q_2^2)\right)p_2^2+\gamma_1 q_2 p_2 \, .
\label{eq:addconst}
\end{equation}
Since we are in dimension two, there is, of course, a dependence relation
\begin{equation}
H_{(2)}=I_1+I_2-\gamma_2 \mathcal{J}^2 \, ,
\label{eq:dep}
\end{equation}
which finally establishes the superintegrability of the system: in fact, the two sets $\{H_{(2)}, \mathcal{J}, I_1\}$ and $\{H_{(2)}, \mathcal{J}, I_2\}$ are formed by three functionally independent constants of motion. Concerning the symmetry algebra, the following first integrals
\begin{equation}
X_1:=\mathcal{J}/2 \, , \qquad X_2:=(I_1-I_2)/2 \, , \qquad X_3:=\{X_1, X_2\}
\label{integralsofmotion}
\end{equation}
close in a cubic Higgs-type algebra \cite{PWY2017}
\begin{equation}
\{X_1, X_2\}= X_3 \, , \qquad \{X_3, X_1\}= X_2 \, , \qquad \{X_2, X_3\}= -(\gamma_1^2+2 \gamma_2 H_{(2)})X_1-8 \gamma_2^2 X_1^3
\label{cubhiggs}
\end{equation}
which is endowed with a quartic Casimir invariant
\begin{equation}
C=X_2^2+X_3^2-(\gamma_1^2+2 \gamma_2 H_{(2)})X_1^2-4 \gamma_2^2 X_1^4 \, ,
\label{casimir}
\end{equation}
which, in the realization, reads
\begin{equation}
C=\frac{1}{4}H_{(2)}^2 \, .
\label{eq:cas}
\end{equation}
\end{prop}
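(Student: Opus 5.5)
The plan is to verify each assertion of Proposition~\ref{prop:myprop} by direct Poisson-bracket computation, organised around the dilation function $D:=q_1p_1+q_2p_2$ and the Euclidean generators. We set $T:=p_1^2+p_2^2$ and $r^2:=q_1^2+q_2^2$ and use the elementary brackets
\[
\{D,p_i\}=p_i,\qquad \{D,q_i\}=-q_i,\qquad \{\mathcal{J},D\}=0,\qquad \{\mathcal{J},T\}=0 .
\]

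\textbf{Angular momentum.} Since $H_{(2)}=T+\gamma_1D+\gamma_2D^2$ and $\mathcal{J}$ commutes with both $T$ and $D$ (both are rotation invariant), we get $\{H_{(2)},\mathcal{J}\}=0$ immediately.

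\textbf{Dependence relation.} Next we check \eqref{eq:dep} algebraically. One has $I_1+I_2=(1+\gamma_2r^2)T+\gamma_1D$, and the Lagrange identity gives $r^2T=D^2+\mathcal{J}^2$. Substituting, $I_1+I_2-\gamma_2\mathcal{J}^2=T+\gamma_1D+\gamma_2D^2=H_{(2)}$. Consequently it suffices to prove $\{H_{(2)},I_1\}=0$, since then $\{H_{(2)},I_2\}=\{H_{(2)},H_{(2)}-I_1+\gamma_2\mathcal{J}^2\}=0$.

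\textbf{The bracket $\{H_{(2)},I_1\}$.} Write $I_1=p_1^2+\gamma_2r^2p_1^2+\gamma_1q_1p_1$. We expand $\{T+\gamma_1D+\gamma_2D^2,\,I_1\}$ and collect terms by powers of $\gamma_1,\gamma_2$.
\begin{itemize}
\item The order-zero term $\{T,p_1^2\}$ vanishes.
\item $D$ acts by scaling weight: the monomial $q^ap^b$ has weight $b-a$. Hence $\{D,p_1^2\}=2p_1^2$, while $\{D,q_1p_1\}=0$ and $\{D,r^2p_1^2\}=0$.
\item $\{T,q_1p_1\}=2p_1^2$ and $\{T,r^2p_1^2\}=4Dp_1^2$.
\item $\{D^2,\cdot\}=2D\{D,\cdot\}$.
\end{itemize}
Up to the overall sign convention, the coefficient of $\gamma_1$ is then $2p_1^2-2p_1^2=0$, the coefficient of $\gamma_2$ is $4Dp_1^2-4Dp_1^2=0$, and the mixed and quadratic terms vanish because $q_1p_1$ and $r^2p_1^2$ have weight zero. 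The same computation with the indices swapped covers $I_2$. Functional independence of $\{H_{(2)},\mathcal{J},I_1\}$ will be checked by showing that the Jacobian has rank $3$ at a generic point, for instance by looking at the leading momentum parts $T$, $\mathcal{J}$, $p_1^2$.

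\textbf{Algebra and Casimir.} We take $X_3:=\{X_1,X_2\}$ as a definition, and $\{X_3,X_1\}=X_2$ follows because $X_2$ is a quadrupole-type ($\cos2\theta$) component under rotations. To see this, note that $\mathcal{J}/2$ generates rotations at half speed and $X_2$ has angular weight $2$, so $X_2$ and $X_3$ rotate into each other as $\cos$ and $\sin$. Concretely, we will compute
\[
X_3=\tfrac12\{\mathcal{J},I_1\}=(1+\gamma_2r^2)p_1p_2+\tfrac{\gamma_1}{2}(q_1p_2+q_2p_1)
\]
up to sign, and then verify the rotation relation directly.

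\textbf{Main obstacle.} The hardest step is $\{X_2,X_3\}$ together with the Casimir identity. Our plan is to avoid brute-force expansion by expressing everything in terms of $r^2,D,\mathcal{J},T$ and the quadrupole pieces $p_1^2-p_2^2$, $p_1p_2$, using $r^2T=D^2+\mathcal{J}^2$ repeatedly. For the Casimir, we will compute $X_2^2+X_3^2$ as $\tfrac14$ of a squared modulus: using the complex combination
\[
X_2+iX_3\propto(1+\gamma_2r^2)(p_1+ip_2)^2+\gamma_1(q_1+iq_2)(p_1+ip_2),
\]
the modulus squared factorises. Reducing it with the Lagrange identity should yield $\tfrac14H_{(2)}^2+(\gamma_1^2+2\gamma_2H_{(2)})X_1^2+4\gamma_2^2X_1^4$. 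The cubic bracket then follows from the Casimir: since $C$ is a function of $H_{(2)}$ only, $\{X_1,C\}=0$ forces $\{X_2,X_3\}=-\tfrac12\,\partial(C-X_2^2-X_3^2)/\partial X_1$ when $\{X_1,\cdot\}$ acts as a rotation. This gives exactly \eqref{cubhiggs}, and the factorisation computation is where we expect most of the effort.
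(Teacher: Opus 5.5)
Most of your argument is sound. The weight bookkeeping for $\{H_{(2)},I_1\}=0$ works, although with $\{D,p_i\}=p_i$, i.e.\ $\{q_i,p_j\}=\delta_{ij}$, you need $\{T,q_1p_1\}=-2p_1^2$ and $\{T,r^2p_1^2\}=-4Dp_1^2$ to get the cancellations you state. The rotation argument for $\{X_3,X_1\}=X_2$ is correct. Your factorisation does give $4(X_2^2+X_3^2)=(H_{(2)}+\gamma_2\mathcal{J}^2)^2+\gamma_1^2\mathcal{J}^2$, hence \eqref{eq:cas}; for imaginary $\gamma_i$ read the ``modulus squared'' as $(X_2+iX_3)(X_2-iX_3)$ with only the phase-space variables conjugated.

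The gap is in the final step, the derivation of \eqref{cubhiggs}. The relation $\{X_1,C\}=0$ cannot determine $\{X_2,X_3\}$. Once $\{X_1,X_2\}=X_3$ and $\{X_3,X_1\}=X_2$ hold, $\{X_1,X_2^2+X_3^2\}=2X_2X_3-2X_3X_2=0$ identically, and the remaining terms of $C$ depend only on $X_1$ and $H_{(2)}$. So $\{X_1,C\}=0$ is the tautology $0=0$ whatever $\{X_2,X_3\}$ is. You must bracket with $X_2$ (or $X_3$) instead. Write your Casimir result as $X_2^2+X_3^2=F$ with $F:=\tfrac14H_{(2)}^2+(\gamma_1^2+2\gamma_2H_{(2)})X_1^2+4\gamma_2^2X_1^4$, and apply $\{X_2,\cdot\}$. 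The left side gives $2X_3\{X_2,X_3\}$. The right side gives $\partial_{X_1}F\,\{X_2,X_1\}=-X_3\,\partial_{X_1}F$, because $\{X_2,H_{(2)}\}=0$. Divide by $X_3$ on the dense open set where it does not vanish and extend by continuity; this gives $\{X_2,X_3\}=-\tfrac12\partial_{X_1}F=-(\gamma_1^2+2\gamma_2H_{(2)})X_1-8\gamma_2^2X_1^3$.

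Your closing formula also has the wrong sign: $-\tfrac12\partial_{X_1}(C-X_2^2-X_3^2)=+\tfrac12\partial_{X_1}F$, which is the negative of \eqref{cubhiggs}. At $\gamma_2=0$ it would give $+\gamma_1^2X_1$. A direct computation gives $\{X_2,X_3\}=\tfrac{\gamma_1^2}{4}\{q_1p_1-q_2p_2,\,q_1p_2+q_2p_1\}=-\tfrac{\gamma_1^2}{2}\mathcal{J}=-\gamma_1^2X_1$, consistent with \eqref{oscymm} for $\gamma_1=-\imath\beta$. This is not a matter of convention: reversing the sign of the bracket also reverses $X_3=\{X_1,X_2\}$, so \eqref{cubhiggs} is convention independent. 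The correct compact form is $\{X_2,X_3\}=\tfrac12\,\partial C/\partial X_1$, as in the remark following the proposition.

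A last, minor point concerns functional independence. $T$ and $p_1^2$ are not the leading momentum parts of $H_{(2)}$ and $I_1$; those are $T+\gamma_2D^2$ and $(1+\gamma_2r^2)p_1^2$. Instead, evaluate at $q=0$. There the minor of $(dH_{(2)},d\mathcal{J},dI_1)$ in the $(dq_2,dp_1,dp_2)$ columns equals $-4p_1^2p_2$, which is nonzero whenever $p_1p_2\neq0$.
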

\begin{rem}
It is interesting to note that, in terms of the Casimir \eqref{casimir}, the symmetry algebra can be rewritten as
\begin{equation}
	\{X_1, X_2\}= \frac{1}{2}\frac{\partial C}{\partial X_3} \, , \qquad \{X_3, X_1\}=  \frac{1}{2}\frac{\partial C}{\partial  X_2} \, , \qquad \{X_2, X_3\}=  \frac{1}{2}\frac{\partial C}{\partial  X_1} \, .
	\label{cubhiggss}
\end{equation}
This expresses each bracket as half the partial derivative of the Casimir $C$ with respect to the missing generator, a pattern analogous to the representation of quadratic Poisson algebras via the polynomial $h$ in \cite[eqs.~(8)--(9)]{Das01}.
\end{rem}
\noindent An important special case arises when $\gamma_2=\alpha=0$ and $\gamma_1=-\imath \beta$, as in the original paper mentioned before, giving the system
 \begin{equation}
 H_{(1)}=p_1^2+p_2^2-\imath \beta (q_1 p_1+q_2 p_2) \, ,
 \label{eq:ho2}
 \end{equation}
 which is canonically equivalent to the $2$D isotropic harmonic oscillator, via the canonical transformation
 \begin{equation}
	\label{Eq:canonic}
	\begin{split}
 q_1&=\sqrt{2}\bar{q}_1 \, , \qquad p_1=\frac{1}{\sqrt{2}} \bar{p}_1+\imath \frac{\beta}{\sqrt{2}}\bar{q}_1 \, ,\\
 q_2&=\sqrt{2}\bar{q}_2 \, , \qquad p_2=\frac{1}{\sqrt{2}} \bar{p}_2+\imath \frac{\beta}{\sqrt{2}}\bar{q}_2 \, ,
	\end{split}
 \end{equation}

\noindent which maps the system to
 \begin{equation}
	\bar{H}_{(1)}=\frac{\bar{p}_1^2+\bar{p}_2^2}{2}+\frac{\beta^2}{2}(\bar{q}_1^2+\bar{q}_2^2) \, ,
	\label{eq:ho}
\end{equation}
and the integrals of motion to
\begin{equation}
	\bar{X}_1=\bar{\mathcal{J}}/2 \, , \qquad \bar{X}_2:=(\bar{I}_1-\bar{I}_2)/2 \, , \qquad \bar{X}_3:=\{\bar{X}_1, \bar{X}_2\}
	\label{integrals}
\end{equation}
with
\begin{equation}
\bar{X}_1=\frac{1}{2}(\bar{q}_1 \bar{p}_2-\bar{q}_2 \bar{p}_1) \, , \qquad \bar{X}_2=\frac{1}{4}(\bar{p}_1^2+ \beta^2\bar{q}_1^2-\bar{p}_2^2-\beta^2 \bar{q}_2^2) \, , \qquad \bar{X}_3=\frac{1}{2}(\bar{p}_1 \bar{p}_2+\beta^2 \bar{q}_1 \bar{q}_2)
\end{equation}
which close in the standard algebra of the harmonic oscillator on $\mathbb{E}^2$
\begin{equation}
	\{\bar{X}_1, \bar{X}_2\}= \bar{X}_3 \, , \qquad \{\bar{X}_3, \bar{X}_1\}= \bar{X}_2 \, , \qquad \{\bar{X}_2, \bar{X}_3\}= \beta^2 \bar{X}_1
	\label{oscymm}
\end{equation}
together with the corresponding Casimir function
\begin{equation}
C=\beta^2 \bar{X}_1^2+\bar{X}_2^2+\bar{X}_3^2 \, .
\label{eq:casimosc}
\end{equation}

\subsection{On the connection with the oscillator on \texorpdfstring{$\mathbb{S}^2$ and $\mathbb{H}^2$}{S2 and H2}}
\label{sec2.2}
\noindent There exists a natural relationship between the previous interpretation of $H_{(2)}$ on $\mathbb{E}^2$ and an alternative one as a superintegrable Hamiltonian on a $2$D positive or negative constant curvature space \cite{Fordy2018, AGSH,ANY26}. To this aim, let us consider polar coordinates
\begin{align}
q_1 &= r \cos (\varphi)\, , \qquad p_1 = \cos(\varphi) p_r-\frac{\sin(\varphi)}{r} p_\varphi\\
q_2 &= r \sin (\varphi)\, , \qquad p_2 = \sin(\varphi) p_r+\frac{\cos(\varphi)}{r} p_\varphi \, ,
\label{polcoor}
\end{align}
for $r>0$, $\varphi \in [0,2\pi)$. In these variables, the original Hamiltonian \eqref{eq:hamZK} assumes the following form
\begin{equation}
H=p_r^2+\frac{p_\varphi^2}{r^2}+\gamma_1 (r p_r) +\gamma_2 (r p_r)^2=(1+\gamma_2 r^2)p_r^2+\frac{p_\varphi^2}{r^2}+\gamma_1 (r p_r) 
\label{eq:ham2.1}
\end{equation}
In these variables, the $2$D metric reads
\begin{equation}
\text{d}s^2=\frac{1}{1+\gamma_2 r^2}\text{d}r^2+r^2 \text{d}\varphi^2 \, .
\label{eq:metric}
\end{equation}
The Gaussian curvature is constant, $\kappa = -\gamma_2$ \cite{AGSH,Fordy2018,bulg}, and the metric~\eqref{eq:metric} describes the three classical simply connected two-dimensional spaces of constant curvature:
\begin{align}
	\kappa &= -\gamma_2 = 0 
	\quad \Longleftrightarrow \quad \mathbb{E}^2 \, ,\\
		\kappa &= -\gamma_2 > 0 
	\quad \Longleftrightarrow \quad \mathbb{S}^2 \, ,\\
	\kappa &= -\gamma_2 \,< 0 
	\quad \!\Longleftrightarrow \quad \mathbb{H}^2 \, .
\end{align}
Now, let us notice that the polar radial coordinate $r$ is no longer a geodesic distance in a
curved space with $\kappa \neq 0$ and, in order to perform an appropriate geometrical interpretation of the system on the sphere and hyperbolic space,
we need to introduce the so-called geodesic radial
coordinates (see \cite{AGSH} and references therein). Taking into account $\kappa=-\gamma_2$, we introduce
\begin{equation}
r=S_{\kappa}(\rho) \, ,
\end{equation}
where $\rho$ denotes the geodesic distance from the origin to the particle, while $\varphi$ is the usual angular coordinate. Here, we have introduced the $\kappa$-dependent sine functions, which together with the cosine functions are defined as
\begin{equation}
	S_{\kappa}(x):=\begin{cases}
		\dfrac{1}{\sqrt{\kappa}}\sin\!\bigl(\sqrt{\kappa}\,x\bigr), & \kappa>0\, , \\[4pt]
		x, & \kappa=0\, , \\[4pt]
		\dfrac{1}{\sqrt{-\kappa}}\sinh\!\bigl(\sqrt{-\kappa}\,x\bigr), & \kappa<0\, .
	\end{cases} \qquad \qquad 	C_{\kappa}(x):=
	\begin{cases}
	\cos\!\bigl(\sqrt{\kappa}\,x\bigr), & \kappa>0\, , \\[4pt]
	1, & \kappa=0\, , \\[4pt]
	\cosh\!\bigl(\sqrt{-\kappa}\,x\bigr), & \kappa<0\, .
	\end{cases}
\label{curvsine}
\end{equation}
The $\kappa$-dependent tangent function is defined as
\begin{equation}
T_{\kappa}(x):=\frac{S_{\kappa}(x)}{C_{\kappa}(x)}\,.
\label{ktan}
\end{equation}
The metric in these new coordinates reads
\begin{equation}
\text{d}s^2=\text{d}\rho^2+S_{\kappa}^2(\rho) \text{d}\varphi^2
\label{eq:metrix}
\end{equation}
Let us notice that when $\kappa=-\gamma_2=0$, i.e. in the flat limit, the metric \eqref{eq:metrix} collapses to
\begin{equation}
\text{d}s^2=\text{d}r^2+r^2 \text{d}\varphi^2
\label{flatmet}
\end{equation}
since in this regime $\rho=r$. Introducing canonical geodesic polar variables $\{\rho, \varphi, p_\rho, p_\varphi\}$, the Zernike Hamiltonian \eqref{eq:hamZK} takes the form of a natural Hamiltonian under the following canonical transformation \cite{AGSH}
\begin{align}
	q_1 &= S_{\kappa}(\rho)\cos\varphi \, , \qquad 
	p_1 =
	\frac{\cos\varphi}{C_{\kappa}(\rho)}
	\left(
	p_{\rho} - \frac{\gamma_1}{2} T_{\kappa}(\rho)
	\right)
	-	\frac{\sin\varphi}{S_{\kappa}(\rho)}\, p_{\varphi} \, , \\
		q_2 &= S_{\kappa}(\rho)\sin\varphi \, , \qquad p_2 =
	\frac{\sin\varphi}{C_{\kappa}(\rho)}
	\left(
	p_{\rho} - \frac{\gamma_1}{2} T_{\kappa}(\rho)
	\right)
	+
	\frac{\cos\varphi}{S_{\kappa}(\rho)}\, p_{\varphi} \, ,
\end{align}
with $0 < \rho < \frac{\pi}{2\sqrt{\kappa}}$ when $\kappa>0$, i.e. when we are considering the $2$-sphere $\mathbb{S}^2$, or $0<\rho<\infty$ when $\kappa <0$, i.e. when we are considering the hyperboloid $\mathbb{H}^2$. In these variables the Hamiltonian assumes the form
\begin{equation}
H_{(2)}=p_\rho^2+\frac{p_{\varphi}^2}{S^2_\kappa(\rho)}-\frac{\gamma_1^2}{4}T_\kappa^2(\rho) \, .
\label{eq:hamigeorada}
\end{equation}
This Hamiltonian, depending on the value of $\gamma_1$ and $\kappa=-\gamma_2$, describes different systems. When $\gamma_1 = 2\imath\omega$ with $\omega \in \mathbb{R}$, one has
\begin{equation}
	H_{(2)}=p_\rho^2+\frac{p_{\varphi}^2}{S^2_\kappa(\rho)}+\omega^2T_\kappa^2(\rho) \, =p_{\rho}^{2}
	+\frac{\kappa\,p_{\varphi}^{2}}{\sin^{2}\!\bigl(\sqrt{\kappa}\,\rho\bigr)}
	+\frac{\omega^{2}}{\kappa}\tan^{2}\!\bigl(\sqrt{\kappa}\,\rho\bigr)
	\label{eq:hamigeorad}
\end{equation}
In this explicit form, the Hamiltonian covers the Higgs oscillator \cite{Higgs79, ANY26} on $\mathbb{S}^2$
($\kappa > 0$), the hyperbolic one on $\mathbb{H}^2$
($\kappa < 0$) and the usual isotropic harmonic oscillator
on $\mathbb{E}^2$ ($\kappa = -\gamma_2 = 0$) \cite{bulg}.

\begin{rem}
The canonical transformation above decomposes into two steps, both discussed in \cite{AGSH}.
The first is the passage to geodesic polar coordinates $(\rho,\varphi)$, which accounts for the
curvature $\kappa=-\gamma_2$ of the underlying space.
The second is a gauge shift $p_\rho \mapsto p_\rho - \tfrac{\gamma_1}{2}T_\kappa(\rho)$ in the
fiber, which removes the term linear in momenta and converts the Hamiltonian into natural
(kinetic-plus-potential) form \eqref{eq:hamigeorada}.
This shift is globally well-defined because the magnetic $1$-form $\tfrac{\gamma_1}{2}\,\mathrm{d}(\ln r)$
is exact, i.e.\ its curvature vanishes.
The precise sense in which both steps are invisible to the Haantjes operators of Section~\ref{sec4}
is spelled out in Remark~\ref{rem:curvature_blind}.
\end{rem}

\section{Separation of variables via Haantjes geometry}
\label{sec3}
\subsection{Haantjes, Nijenhuis tensors, \texorpdfstring{$\omega \mathscr{H}$}{omegaH} manifolds and Haantjes chains}
\label{subsec3.1}

The theory of Haantjes operators provides a geometric framework for identifying separation coordinates in Hamiltonian systems. Following the approach developed in \cite{tempesta2021haantjes,tempesta2022haantjes,nozaleda2022classical}, we outline the key theoretical constructs and the computational procedure.

\medskip

\noindent \textbf{Nijenhuis and Haantjes tensors.}
Let $M$ be a smooth $n$-dimensional manifold and $L : TM \to TM$ a $(1,1)$ tensor field. The \emph{Nijenhuis torsion} of $L$ is defined by
\be
\label{eq:nijenhuis_torsion}
T_L(X,Y) := L^2[X,Y] + [LX, LY] - L([X, LY] + [LX,Y]),
\ee
where $X,Y \in TM$ and $[\cdot,\cdot]$ denotes the Lie bracket. In local coordinates $x = (x^1,\ldots,x^n)$, the Nijenhuis torsion is the skew-symmetric $(1,2)$ tensor
\be
(T_L)^i_{jk} = \sum_{\alpha=1}^n \left( \frac{\partial L^i_k}{\partial x^\alpha} L^\alpha_j - \frac{\partial L^i_j}{\partial x^\alpha} L^\alpha_k + \left(\frac{\partial L^\alpha_j}{\partial x^k} - \frac{\partial L^\alpha_k}{\partial x^j}\right) L^i_\alpha \right).
\ee

The \emph{Haantjes torsion} of $L$ is the vector-valued $2$-form
\be
\mathcal{H}_L(X,Y) := L^2 T_L(X,Y) + T_L(LX, LY) - L(T_L(X, LY) + T_L(LX,Y)),
\ee
with local expression
\be
\label{eq:haantjes_torsion}
\begin{split}
(\mathcal{H}_L)^i_{jk} = \sum_{\alpha=1}^n \bigg( & -2(L^3)^i_\alpha \partial_{[j} L^\alpha_{k]} + (L^2)^i_\alpha \left( \partial_{[j} (L^2)^\alpha_{k]} + 4 \sum_{\beta=1}^n L^\beta_{[j} \partial_{|\beta|} L^\alpha_{k]} \right) \\
& - 2L^i_\alpha \sum_{\beta=1}^n\left( L^\beta_{[j} \partial_{|\beta|} (L^2)^\alpha_{k]} + (L^2)^\beta_{[j} \partial_{|\beta|} L^\alpha_{k]} \right) + (L^2)^\alpha_{[j} \partial_{|\alpha|} (L^2)^i_{k]} \bigg),
\end{split}
\ee
where indices in square brackets are skew-symmetrized, except those in vertical bars.

An operator $L$ is called a \emph{Haantjes operator} (respectively \emph{Nijenhuis operator}) if $\mathcal{H}_L \equiv 0$ (respectively $T_L \equiv 0$). Every Nijenhuis operator is also Haantjes, but the converse does not hold in general.

\medskip

\noindent \textbf{Symplectic-Haantjes manifolds.}
Let $(M,\omega)$ be a symplectic manifold of dimension $2n$. A \emph{symplectic-Haantjes} (or $\omega\mathscr{H}$) manifold of class $m$ is a triple $(M,\omega,\mathscr{H})$ where:
\begin{itemize}
	\item $\mathscr{H}$ is a Haantjes algebra of rank $m$: a set of Haantjes operators $K : TM \to TM$ closed under $C^\infty(M)$-linear combinations and under composition, generated by $m$ independent operators;
	\item $\omega$ and $\mathscr{H}$ are \emph{algebraically compatible}: for all $K \in \mathscr{H}$,
	\be
	\label{eq:compatibility}
	\Omega K = K^T \Omega,
	\ee
	where $\Omega := \omega^\flat : TM \to T^*M$ is defined by $\omega(X,Y) = \langle \Omega X, Y \rangle$.
\end{itemize}
\noindent While general Haantjes algebras need not be Abelian, by \cite[Proposition~2]{Reyes2024}, compatibility with $\omega$ forces $\mathscr{H}$ to be Abelian: all operators in $\mathscr{H}$ commute pairwise under composition.

\noindent In Darboux (canonical) coordinates $(q,p) = (q^1,\ldots,q^n,p_1,\ldots,p_n)$, condition \eqref{eq:compatibility} implies that $K$ has the block structure
\be
\label{eq:symplectic_structure}
K = \begin{pmatrix} A(q,p) & B(q,p) \\ C(q,p) & A^T(q,p) \end{pmatrix}, \quad B + B^T = 0, \quad C + C^T = 0,
\ee
where $A$, $B$, $C$ are $n \times n$ matrices.

\medskip

\noindent \textbf{Haantjes chains.}
Given a function $H \in C^\infty(M)$ and a distinguished basis $\{K_1,\ldots,K_m\}$ of $\mathscr{H}$, we say that $H$ \emph{generates a Haantjes chain} of length $m$ if
\be
d(K_\alpha^T dH) = 0, \quad \alpha = 1,\ldots,m.
\ee
In practice, we use the local version of the chain equations
\be
\label{eq:chain_equations}
dH_\alpha = K_\alpha^T dH,
\ee
where the (locally) exact $1$-forms $dH_\alpha$ are the \emph{elements of the Haantjes chain}, and their potential functions $H_\alpha$ are in involution with respect to the Poisson bracket induced by $\omega$.

\subsection{The main separation theorem and the computational procedure}
\label{subsec3.2}

The fundamental connection between Haantjes geometry and separation of variables is given by the following result \cite[Theorem 2]{nozaleda2022classical}.

\begin{Theo}[Jacobi--Haantjes]
	\label{thm:jacobi_haantjes}
	Let $(M,\omega,\mathscr{H})$ be an Abelian semisimple $\omega\mathscr{H}$ manifold of class $n$ and $\{H_1=H, H_2, \ldots, H_n\}$ be a set of $C^\infty(M)$ functions belonging to a Haantjes chain generated by $H \in C^\infty(M)$ via a basis $\{K_1=\emph{I},\ldots,K_n\} \subset \mathscr{H}$, where \emph{I} is the identity operator. Then each set $(q,p)$ of Darboux--Haantjes (DH) coordinates provides separation variables for the Hamilton--Jacobi equation associated with each function $H_j$.

	Conversely, if $M$ is a symplectic manifold and $\{H_1, H_2, \ldots, H_n\}$ are $n$ independent $C^\infty(M)$ functions separable in a set of Darboux coordinates $(q,p)$, then they belong to the Haantjes chain generated by the operators
	\be
	\label{eq:separation_operators}
	K_\alpha = \sum_{i=1}^n \frac{\partial H_\alpha}{\partial p_i} \bigg/ \frac{\partial H}{\partial p_i} \left( \frac{\partial}{\partial q^i} \otimes dq^i + \frac{\partial}{\partial p_i} \otimes dp_i \right), \quad \alpha = 1,\ldots,n,
	\ee
	where $H$ is any of the functions $\{H_1,\ldots,H_n\}$ with $\frac{\partial H}{\partial p_i} \neq 0$ for $i=1,\ldots,n$. These operators generate a semisimple $\omega\mathscr{H}$ structure on $M$.
\end{Theo}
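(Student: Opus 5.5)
Both directions will be proved in Darboux--Haantjes coordinates, where every operator of $\mathscr{H}$ is diagonal. Recall that DH coordinates $(q,p)$ are Darboux charts in which every $K\in\mathscr{H}$ takes the form
\[
K=\sum_{i=1}^n \lambda_i^{(K)}(q,p)\Bigl(\frac{\partial}{\partial q^i}\otimes dq^i+\frac{\partial}{\partial p_i}\otimes dp_i\Bigr),
\]
that is, $K$ is diagonal with the same eigenvalue on each pair $(q^i,p_i)$. This is consistent with the block form \eqref{eq:symplectic_structure}, with $B=C=0$ and $A$ diagonal. Semisimplicity and the Abelian property guarantee that such charts exist.

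\emph{Direct part.} In DH coordinates the chain equations \eqref{eq:chain_equations} read
\[
\frac{\partial H_\alpha}{\partial q^i}=\lambda^{(\alpha)}_i\,\frac{\partial H}{\partial q^i},\qquad \frac{\partial H_\alpha}{\partial p_i}=\lambda^{(\alpha)}_i\,\frac{\partial H}{\partial p_i}.
\]
Since $K_1=\mathrm{I}$ and the $K_\alpha$ are independent, I choose a basis of $\mathscr{H}$ made of the spectral projectors $P_i=\frac{\partial}{\partial q^i}\otimes dq^i+\frac{\partial}{\partial p_i}\otimes dp_i$. Each $P_i$ is a $C^\infty$-combination of the $K_\alpha$, obtained by inverting the matrix $(\lambda^{(\alpha)}_i)$. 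The chain property then gives closed one-forms
\[
P_i^T\,dH=\frac{\partial H}{\partial q^i}\,dq^i+\frac{\partial H}{\partial p_i}\,dp_i .
\]
This step needs care, because closedness of $K_\alpha^T dH$ is preserved only under constant-coefficient combinations. I would handle it by showing that $d(P_i^T dH)=0$ follows from $d(K_\alpha^T dH)=0$ together with the relation $d\lambda^{(\alpha)}_i\wedge$ (restricted), which the Haantjes condition enforces: each eigenvalue $\lambda_i$ depends only on $(q^i,p_i)$ up to the vanishing torsion conditions.

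Closedness of $f_i\,dq^i+g_i\,dp_i$ forces $f_i$ and $g_i$ to depend only on $(q^i,p_i)$. Hence $dH=\sum_i d\phi_i(q^i,p_i)$, so $H=\sum_i\phi_i(q^i,p_i)$ is additively separated. More generally, every $H_\alpha$ satisfies the Levi-Civita separability conditions
\[
\partial_{q^i}H\,\partial_{p_i}H_\alpha-\partial_{p_i}H\,\partial_{q^i}H_\alpha=0 \ \ \text{(no sum)},
\]
plus the mixed-index conditions. Both follow directly from the proportionality $dH_\alpha|_{\text{block }i}=\lambda^{(\alpha)}_i\,dH|_{\text{block }i}$. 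By the Levi-Civita/Benenti criterion, this yields separability of the Hamilton--Jacobi equation for every $H_j$.

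\emph{Converse.} Let the $H_\alpha$ be separable in $(q,p)$. Define $K_\alpha$ by \eqref{eq:separation_operators}; these are diagonal with eigenvalues $\lambda^{(\alpha)}_i=\partial_{p_i}H_\alpha/\partial_{p_i}H$ that are constant on each block. Compatibility \eqref{eq:compatibility} is immediate, since diagonal operators with paired eigenvalues commute with $\Omega$ in the required sense. The operators commute with each other, and each is Haantjes because an operator that is diagonal in some coordinate frame has vanishing Haantjes torsion (Haantjes' theorem). Closure under composition and $C^\infty$ combinations holds for the algebra they generate, and semisimplicity holds by construction. The chain equation $K_\alpha^T dH=dH_\alpha$ in the $p_i$ components holds by definition. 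In the $q^i$ components it is exactly the Levi-Civita relation
\[
\partial_{q^i}H_\alpha\,\partial_{p_i}H=\partial_{p_i}H_\alpha\,\partial_{q^i}H,
\]
which is a consequence of separability (both functions are functions of the separated $\Phi_i(q^i,p_i)$).

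\emph{Main obstacle.} I expect the hardest step to be the closedness argument passing from the basis $K_\alpha$ to the projectors $P_i$ in the direct part. I would first try to use the Haantjes torsion condition in DH coordinates, which forces $\partial_{q^j}\lambda_i=\partial_{p_j}\lambda_i=0$ whenever $\lambda_i\neq\lambda_j$, combined with the independence of the $n$ operators.
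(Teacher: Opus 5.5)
The step you flag as the main obstacle cannot be carried out, because its conclusion is false. (I write $\Pi_i$ for your spectral projectors, to avoid a clash with the DH momenta.) Closedness of every $\Pi_i^T dH$ would force $H=\sum_i\phi_i(q^i,p_i)$. That is much stronger than Hamilton--Jacobi separability, and it fails for the paper's own examples.

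In the DH chart \eqref{eq:canonical_N2}, $K_{I_2}$ is diagonal with eigenvalue $0$ on the $(Q_1,P_1)$ block and $1+\gamma_2Q_1^2$ on the $(Q_2,P_2)$ block. By \eqref{eq:H2_separated}--\eqref{eq:I2_separated}, $H_{(2)}=\phi_1(Q_1,P_1)+I_2(Q_2,P_2)/(1+\gamma_2Q_1^2)$. Hence $K_{I_2}^T dH_{(2)}=dI_2$ is closed, as the chain requires. But $\Pi_2=K_{I_2}/(1+\gamma_2Q_1^2)$ gives $\Pi_2^T dH_{(2)}=dI_2/(1+\gamma_2Q_1^2)$, which is not closed when $\gamma_2\neq 0$. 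The elliptic St\"ackel form \eqref{eq:H2_separated_elliptic} is another instance.

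The lemma you intended to use is that vanishing Haantjes torsion forces $\partial_{q^j}\lambda_i=\partial_{p_j}\lambda_i=0$ whenever $\lambda_i\neq\lambda_j$. That property characterises \emph{Nijenhuis} operators, not Haantjes ones. For $L=\sum_a\lambda_a\,\partial_{x^a}\otimes dx^a$ one has $T_L(\partial_{x^a},\partial_{x^b})=(\lambda_a-\lambda_b)(\partial_{x^a}\lambda_b\,\partial_{x^b}+\partial_{x^b}\lambda_a\,\partial_{x^a})$, whereas $\mathcal{H}_L\equiv 0$ for arbitrary eigenvalue functions. Your converse relies on exactly this last fact, so the two halves of your proposal contradict each other. $K_{I_2}$, whose second-block eigenvalue depends on $Q_1$, is an explicit counterexample to the lemma.

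The fix is to drop the projectors: your ``More generally'' sentence, made precise, is the whole direct part. The paper itself gives no proof and defers to \cite{nozaleda2022classical}. In a DH chart the chain equations say the $i$-th block of $dH_\alpha$ is $\lambda^{(\alpha)}_i$ times the $i$-th block of $dH$. So all partial brackets $\{H_\alpha,H_\beta\}_{|i}:=\partial_{q^i}H_\alpha\,\partial_{p_i}H_\beta-\partial_{p_i}H_\alpha\,\partial_{q^i}H_\beta$ vanish. Moreover $\det\bigl(\partial H_\alpha/\partial p_k\bigr)_{\alpha,k}=\det\bigl(\lambda^{(\alpha)}_k\bigr)_{\alpha,k}\prod_{k=1}^n\partial_{p_k}H$, which is nonzero where all $\partial_{p_k}H\neq 0$, since the $K_\alpha$ form a basis.

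Now solve $H_\alpha(q,p)=h_\alpha$ for $p=p(q,h)$ and differentiate in $q^j$. Proportionality of the $j$-th blocks gives $\partial p_k/\partial q^j=0$ for $k\neq j$, i.e.\ separation relations $p_k=p_k(q^k,h)$. This is Hamilton--Jacobi separability for every $H_j$, with no claim that $H$ is a sum.

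The same misconception affects your converse. The $q^i$-components of the chain equation do not hold ``because both functions are functions of $\Phi_i(q^i,p_i)$''; that is false for $H_{(2)}$ above. Instead, differentiate the relations $\Phi_j(q^j,p_j,H_1,\dots,H_n)=0$ with respect to $q^i$ and $p_i$. Since $\det(\partial\Phi_j/\partial H_\beta)\neq 0$, both $(\partial_{q^i}H_\alpha)_\alpha$ and $(\partial_{p_i}H_\alpha)_\alpha$ are multiples of the $i$-th column of $(\partial\Phi/\partial H)^{-1}$. That is exactly the proportionality you need. The remaining checks in the converse are fine: compatibility, commutativity, vanishing Haantjes torsion of coordinate-diagonal operators, and semisimplicity.
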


A semisimple Abelian $\omega\mathscr{H}$ manifold admits \emph{Darboux--Haantjes} (DH) coordinates in which the symplectic form takes Darboux form $\omega=\sum_{i=1}^n dq^i \wedge dp_i$ and all operators in $\mathscr{H}$ simultaneously diagonalize. These coordinates provide separation variables for the Hamilton--Jacobi equation. For multiseparable systems possessing multiple inequivalent separation coordinate systems, Theorem~\ref{thm:jacobi_haantjes} implies the existence of multiple independent semisimple Abelian $\omega\mathscr{H}$ structures, one for each separation coordinate system.

\medskip

\noindent \textbf{Computational procedure.}\label{algo}
To determine the $\omega\mathscr{H}$ structures for a given integrable Hamiltonian system with Hamiltonian $H$ and integrals of motion $\{H_1=H,\ldots,H_n\}$, we follow a two-step procedure.

\paragraph{Step A: Construction of Haantjes operators.}\label{stepA}
We seek operators $K_\alpha \in \mathscr{H}$ satisfying three conditions:
\begin{enumerate}
	\item \textbf{Haantjes condition:} $\mathcal{H}_{K_\alpha}(X,Y) = 0$ for all $X,Y \in TM$ (using equation \eqref{eq:haantjes_torsion});
	\item \textbf{Compatibility with $\omega$:} $K_\alpha^T \Omega = \Omega K_\alpha$ (giving the structure \eqref{eq:symplectic_structure});
	\item \textbf{Chain equations:} $K_\alpha^T dH = dH_\alpha$ for $\alpha = 1,\ldots,n$ (equation \eqref{eq:chain_equations}).
\end{enumerate}
The construction proceeds as follows. One starts from the block ansatz \eqref{eq:symplectic_structure} in Darboux coordinates and solves the chain equations \eqref{eq:chain_equations}, which constitute a system of linear PDEs in the components of $K_\alpha$. The remaining Haantjes condition \eqref{eq:haantjes_torsion} is a system of nonlinear PDEs and typically requires additional ans\"atze to reduce the number of unknowns.

\paragraph{Step B: Construction of separation coordinates.}\label{stepB}
For a semisimple $\omega\mathscr{H}$ structure, DH coordinates can be constructed systematically. We outline the general procedure from \cite[Section 3.8]{nozaleda2022classical}.

Let $K\in \mathscr{H}$ be a Haantjes operator with pointwise distinct eigenvalues $\{\lambda_1(x),\ldots,\lambda_n(x)\}$. The spectral decomposition of tangent spaces is
\be
T_x M = \bigoplus_{i=1}^n \mathcal{D}_i(x),
\ee
where
\be
\label{eigendist}
\mathcal{D}_i = \ker(K - \lambda_i \text{I}), \quad \mathcal{E}_i = \bigoplus_{j \neq i} \mathcal{D}_j.
\ee
By Haantjes' seminal result \cite{haantjes1955}, the eigendistributions $\mathcal{D}_i$ are \textit{mutually integrable} (any sum $\mathcal{D}_i+\ldots+\mathcal{D}_s$ with all indices different is integrable) and of rank $2$. Their integral leaves are two-dimensional symplectic submanifolds, symplectically orthogonal to each other
\be
\omega(\mathcal{D}_j, \mathcal{D}_k) = 0 \quad \text{for } j \neq k.
\ee
Correspondingly, the cotangent spaces decompose as $T^*_x M = \bigoplus_{i=1}^n \mathcal{E}^{\circ}_i(x)$, where
\be
\label{annih}
\mathcal{E}^{\circ}_i := \ker(K^T - \lambda_i \text{I}) = \Omega(\mathcal{D}_i)
\ee
are the characteristic co-distributions of rank $2$, which are annihilators of the corresponding distributions $\mathcal{E}_i$. The DH coordinates $(x_i, y_i)$ are pairs of characteristic functions that are constant on the distributions $\mathcal{E}_i$ and satisfy canonical conjugacy relations. Their construction proceeds as follows:

\begin{enumerate}
	\item \textbf{Determine a basis of characteristic co-distributions.} For each $i=1,\ldots,n$, find a basis $\{\sigma_i, \tau_i\}$ of $1$-forms for $\mathcal{E}^{\circ}_i$, satisfying
	\be
	(K^T - \lambda_i \text{I})\sigma_i = 0, \quad (K^T - \lambda_i \text{I})\tau_i = 0.
	\ee

	\item \textbf{Search for exact $1$-forms $\alpha_i=dx_i \in \mathcal{E}^{\circ}_i$.} Represent $\alpha_i$ as
	\be
	\label{alpha}
	\alpha_i = f_i \sigma_i + g_i \tau_i,\qquad f_i,g_i \in C^\infty(M),
	\ee
	and require $d\alpha_i = 0$. This determines the integrating factors $f_i$ and $g_i$, often requiring an ansatz for their functional form.

	\item \textbf{Find potential functions $x_i$.} Integrate $\alpha_i = dx_i$ to obtain the coordinates $\{x_1,\ldots,x_n\}$. By construction, these functions automatically satisfy $\{x_i, x_j\} = 0$ for all $i,j$.

	\item \textbf{Search for canonically conjugate momenta $y_i$.} For each $i$, seek another exact $1$-form $\beta_i = dy_i \in \mathcal{E}^{\circ}_i$ linearly independent of $dx_i$. Express it as
	\be
	\label{beta}
	\beta_i = h_i \, dx_i + r_i \, \tau_i,
	\ee
	where $\tau_i$ is assumed linearly independent of $dx_i$.

	\item \textbf{Impose canonical conjugacy.} Require
	\be
	1 = \{x_i, y_i\} = \langle dx_i, P \, dy_i \rangle = h_i \langle dx_i, P \, dx_i \rangle + r_i \langle dx_i, P \, \tau_i \rangle,
	\ee
	where $P = \Omega^{-1}$ is the Poisson bivector. Since $\langle dx_i, P \, dx_i \rangle = 0$ by the antisymmetry of $P$, this yields
	\be
	\label{ri}
	r_i = \frac{1}{\langle dx_i, P \, \tau_i \rangle}.
	\ee

	\item \textbf{Determine the normalizing factor $h_i$.} Substitute $r_i$ into $\beta_i$ \eqref{beta} and impose $d\beta_i = 0$ to find $h_i$.

	\item \textbf{Find the potential $y_i$.} Integrate the exact $1$-form $\beta_i = dy_i$ to obtain $y_i$.
\end{enumerate}
By construction, the coordinates $(x_1,\ldots,x_n,y_1,\ldots,y_n)$ satisfy
\be
\{x_i, x_j\} = 0, \quad \{y_i, y_j\} = 0, \quad \{x_i, y_j\} = \delta_{ij},
\ee
and thus form a Darboux coordinate system. Moreover, all operators $K \in \mathscr{H}$ simultaneously diagonalize in these coordinates, making them DH coordinates and hence separation variables.

\begin{rem}[Semisimple operators with repeated eigenvalues]
\label{rem:semisimple}
The procedure above is stated for simple spectrum (all eigenvalues $\lambda_i$ pointwise distinct),
but it extends without modification to \emph{semisimple} Haantjes operators, i.e.\ those for
which algebraic and geometric multiplicity coincide for every eigenvalue, equivalently, those that
admit a local frame of eigenvectors.
In that case some eigenvalues may coincide, so each eigendistribution
$\mathcal{D}_i = \ker(K-\lambda_i\mathrm{I})$ has rank $\geq 2$, and the co-distribution
$\mathcal{E}^\circ_i$ has correspondingly higher rank.
The Haantjes condition still guarantees that each $\mathcal{D}_i$ is integrable \cite{haantjes1955},
and the construction of exact $1$-forms in steps 1--7 proceeds as before within each eigenspace.
The higher rank simply means there is more freedom in choosing the separation coordinates within
that eigenspace; all choices are equivalent.

All the Haantjes operators we find for the Zernike system in Section~\ref{sec4} have two distinct eigenvalues and the simplified procedure for simple spectrum applies.
\end{rem}

\begin{rem}[Simplification via Nijenhuis generators \cite{nozaleda2022classical}]
	\label{rem:nijenhuis_simplification}
	When the Haantjes algebra admits a Nijenhuis generator $\mathcal{N}$ (which is always possible for semisimple algebras \cite[Proposition~38]{tempesta2022haantjes}), the procedure simplifies significantly. The eigenvalues $\{\lambda_1(q,p),\ldots,\lambda_n(q,p)\}$ of $\mathcal{N}$ are themselves characteristic functions of the Haantjes web and directly provide the coordinates $\{x_1,\ldots,x_n\}$. In this case, one may skip steps 2 and 3, proceeding to determine the conjugate momenta $y_i$ using steps 4--7. This shortened procedure is illustrated in \cite[Section~5]{nozaleda2022classical} for the Drach--Holt system.

	Since the eigenvalues of $\mathcal{N}$ already furnish the new position coordinates $x_i$, one may also forgo the above algorithm and determine the conjugate momenta $y_i$ via a generating function of the canonical transformation.
\end{rem}

\noindent In Sections~\ref{sec4}--\ref{sec:no_ept}, a central role is played by Haantjes operators of \emph{lift form}: operators with $B = 0$ in the block decomposition~\eqref{eq:symplectic_structure}. These are the symplectic analogues of cotangent lifts of Killing tensors on configuration space; they arise when the DH separation coordinates can be reached by an extended point transformation (the EPT forces $v = Jq$ to be aligned with a coordinate axis, as shown in Section~\ref{sec:no_ept}). The following lemma collects their basic structural properties.

\begin{prop}[A-block reduction for lift-form operators]
\label{prop:a_block}
Let $K = \bigl(\begin{smallmatrix} A & 0 \\ C & A^T \end{smallmatrix}\bigr)$ be a Haantjes operator with $B=0$ in the decomposition~\eqref{eq:symplectic_structure}.  Then:
\begin{enumerate}
  \item[(i)] $C$ is skew-symmetric (asserted by~\eqref{eq:symplectic_structure}).
  \item[(ii)] The eigenvalues of $K$ coincide with those of $A$: $\,\det(K-\lambda I)=[\det(A-\lambda I)]^2$.
  \item[(iii)] Any left eigenvector $\sigma$ of $A$ for eigenvalue $\lambda$ gives a position-only left eigenvector $(\sigma,0)$ of $K$ for the same $\lambda$; equivalently, $(\sigma^T,0)^T$ is an eigenvector of $K^T$ for eigenvalue $\lambda$, hence a characteristic $1$-form of $K$ (in the sense of~\eqref{annih}).
\end{enumerate}
Consequently, for a lift-form Haantjes operator the DH coordinates are determined entirely by the $A$-block. Parts~(ii) and~(iii) provide the position coordinates $Q_i = \lambda_i(q)$ via eigenvalues and characteristic $1$-forms of $A$. The conjugate momenta are recovered via the generating function $F_2 = \sum_i P_i Q_i(q)$, which depends on $A$-block data only.
\end{prop}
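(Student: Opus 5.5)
The three items are essentially linear algebra on the block-triangular matrix, after which the "consequently" part follows from the Darboux--Haantjes algorithm of Step B together with Remark~\ref{rem:nijenhuis_simplification}. I would handle (i)--(iii) in order and then the coordinate statement.

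For (i), I would use the compatibility condition~\eqref{eq:compatibility}. With $\Omega=\bigl(\begin{smallmatrix}0&I\\-I&0\end{smallmatrix}\bigr)$ in Darboux coordinates, compare the blocks of $\Omega K$ and $K^T\Omega$. This gives the lower-right block $A^T$ and forces $B+B^T=0$ and $C+C^T=0$, which is exactly~\eqref{eq:symplectic_structure}. Nothing beyond reading off that formula is needed.

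For (ii), I would note that $K-\lambda I=\bigl(\begin{smallmatrix}A-\lambda I&0\\C&A^T-\lambda I\end{smallmatrix}\bigr)$ is block lower triangular. Hence $\det(K-\lambda I)=\det(A-\lambda I)\det(A^T-\lambda I)=[\det(A-\lambda I)]^2$, so each eigenvalue of $A$ appears in $K$ with doubled multiplicity. This is consistent with the rank-2 eigendistributions in Haantjes' theorem.

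For (iii), I would write $K^T=\bigl(\begin{smallmatrix}A^T&C^T\\0&A\end{smallmatrix}\bigr)$. If $\sigma^TA=\lambda\sigma^T$, i.e.\ $A^T\sigma=\lambda\sigma$, then $K^T(\sigma,0)^T=(A^T\sigma,0)^T=\lambda(\sigma,0)^T$. So $(\sigma,0)$ lies in $\ker(K^T-\lambda I)=\mathcal{E}^\circ_\lambda$ by~\eqref{annih}, and it is a 1-form involving only $dq$.

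For the consequence, there is a genuine point to settle: the statement says $Q_i=\lambda_i(q)$, yet $A=A(q,p)$ in general. I would argue from the Haantjes condition that, when $B=0$, it suffices for the coordinate claim that the characteristic co-distribution contain position-only forms. I would then show that $\mathrm{span}\{dq\}\cap\mathcal{E}^\circ_i$ defines a rank-one codistribution on configuration space. Integrability of the eigendistributions (Haantjes) should make $\bigoplus_{j\ne i}\mathcal{D}_j$ integrable, and then a Frobenius argument in $q$ yields functions $Q_i(q)$ with $dQ_i\in\mathcal{E}^\circ_i$.

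The main obstacle is justifying that $A$ can be taken $p$-independent, or at least that its eigen-covectors have $q$-only integrating factors. This is where I would invoke Remark~\ref{rem:nijenhuis_simplification}. If $K$ (or a Nijenhuis generator in lift form) has eigenvalues depending only on $q$, these are directly the $Q_i$. Otherwise I would solve $d(f_i\sigma_i)=0$ with $f_i=f_i(q)$, which is the two-dimensional integrating-factor problem and always solvable locally. Step 3 then gives $\{Q_i,Q_j\}=0$ automatically, since the $Q_i$ depend only on $q$.

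Finally, for the momenta I would take the type-2 generating function $F_2=\sum_iP_iQ_i(q)$. This gives $p=(\partial Q/\partial q)^T P$, a canonical extended point transformation. In the new coordinates each $dQ_i$ is an eigenform of $K^T$. Symplectic orthogonality of the $\mathcal{D}_i$ then forces the $dP_i$ into the same eigen-codistributions, so $K$ is diagonal and the coordinates are DH coordinates. Every ingredient used comes from the $A$-block alone.
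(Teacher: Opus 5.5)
Your proofs of (i)--(iii) are exactly the paper's. The paper's proof ends there: the closing ``Consequently'' sentences get no argument in the paper.

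The argument you add for them fails at the step where $\mathrm{span}\{dq\}\cap\mathcal{E}^\circ_i$ is said to define a codistribution on configuration space. That intersection is spanned by $\sigma_i(q,p)\cdot dq$. Neither $B=0$ nor the Haantjes condition makes its direction independent of $p$. So the integrating-factor problem is not two-dimensional. Your Frobenius step would also need integrability of $V+\bigoplus_{j\neq i}\mathcal{D}_j$, where $V=\mathrm{span}\{\partial_{p_1},\partial_{p_2}\}$; this does not follow from integrability of the two summands. Concretely, consider
\[
K=\begin{pmatrix}1&-p_1&0&0\\0&0&0&0\\0&q_1&1&0\\-q_1&0&-p_1&0\end{pmatrix}.
\]
This is the projector onto $\mathrm{span}\{\partial_{x_1},\partial_{y_1}\}$ along $\mathrm{span}\{\partial_{x_2},\partial_{y_2}\}$ in the Darboux chart $x_1=-q_1\sin q_2-p_1\cos q_2$, $y_1=q_1\cos q_2-p_1\sin q_2$, $x_2=-p_2-\tfrac{1}{2}(q_1^2+p_1^2)$, $y_2=q_2$. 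It has lift form with blocks as in \eqref{eq:symplectic_structure}, and it is even Nijenhuis. Yet $\mathcal{E}^\circ_1\cap\mathrm{span}\{dq\}=\mathrm{span}\{dq_1-p_1\,dq_2\}$. So $dQ_1\in\mathcal{E}^\circ_1$ with $Q_1=Q_1(q)$ forces $\partial_{q_2}Q_1=-p_1\,\partial_{q_1}Q_1$ for all $p_1$, i.e.\ $Q_1$ is constant. No EPT reaches its DH coordinates, and its eigenvalues $1,0$ are not coordinates. You are tacitly using the extra hypothesis $A=A(q)$. That hypothesis does hold for every operator in Section~\ref{sec4}, and for $n=2$ it makes your integrating-factor step legitimate.

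Even granting $A=A(q)$, your final step is wrong: symplectic orthogonality does not push $dP_i$ into $\mathcal{E}^\circ_i$. After the EPT generated by $\sum_iP_iQ_i(q)$, the operator is $\bigl(\begin{smallmatrix}\Lambda&0\\ \tilde C&\Lambda\end{smallmatrix}\bigr)$ with $\Lambda=\mathrm{diag}(\lambda_1,\lambda_2)$ and $\tilde C=\bigl(\begin{smallmatrix}0&c\\ -c&0\end{smallmatrix}\bigr)$. Set $\kappa=c/(\lambda_2-\lambda_1)$. The eigendistributions are $\mathrm{span}\{\partial_{P_1},\partial_{Q_1}+\kappa\,\partial_{P_2}\}$ and $\mathrm{span}\{\partial_{P_2},\partial_{Q_2}+\kappa\,\partial_{P_1}\}$. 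They are $\omega$-orthogonal, and integrable (so $K$ is Haantjes) whenever $\kappa$ depends on $Q$ only. Yet $K^TdP_1=\lambda_1\,dP_1+c\,dQ_2$. For instance, the constant (hence Nijenhuis) operator with $A=\mathrm{diag}(1,0)$ and $c=-1$ has DH coordinates $(q_1,q_2,p_1-q_2,p_2-q_1)$, not $(q,p)$. In general the momenta come from $F_2=\sum_iP_iQ_i(q)+G$ with $\partial_{Q_1}\partial_{Q_2}G=\kappa$, which involves the $C$-block. The bare $F_2$ in the statement is correct only when $\tilde C=0$. That has to be checked case by case, as the paper effectively does by verifying \eqref{eq:canonical_N2} directly.
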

\begin{proof}
\emph{(i)}~Equation~\eqref{eq:symplectic_structure}.
\emph{(ii)}~Block-triangular determinant: $\det\bigl(\begin{smallmatrix}A{-}\lambda I & 0 \\ C & A^T{-}\lambda I\end{smallmatrix}\bigr)=\det(A{-}\lambda I)\cdot\det(A^T{-}\lambda I)=[\det(A{-}\lambda I)]^2$.
\emph{(iii)}~Direct: $(\sigma,0)K=(\sigma A,0)=(\lambda\sigma,0)$. By transposition, $A^T\sigma^T=\lambda\sigma^T$, and since $K^T=\bigl(\begin{smallmatrix}A^T & C^T \\ 0 & A\end{smallmatrix}\bigr)$, one has $K^T(\sigma^T,0)^T=(A^T\sigma^T,0)^T=\lambda(\sigma^T,0)^T$.
\end{proof}

\section{The separation of variables for the superintegrable Zernike system}
\label{sec4}
\subsection{On the role of the angular momentum \texorpdfstring{$\mathcal{J}$}{C}: polar separation}
\label{subsec4.1}

The angular momentum $\mathcal{J} = q_1 p_2 - q_2 p_1$ is an integral of motion for every Hamiltonian $H_{(N)}$ in the generalized Zernike family, since $H_{(N)}$ depends on $q_1, q_2, p_1, p_2$ only through the rotation-invariant quantities $|\mathbf{p}|^2 = p_1^2 + p_2^2$ and $\mathbf{q} \cdot \mathbf{p} = q_1 p_1 + q_2 p_2$. In polar coordinates, $\mathcal{J}$ coincides with the canonical momentum $p_\varphi$ conjugate to $\varphi$; we use $p_\varphi$ for it in that context. We now apply the computational procedure of Subsection~\ref{subsec3.2} to the pair $(H_{(N)},\frac{1}{2} \mathcal{J}^2)$ and show that it recovers the classical polar separation. (We take $\mathcal{J}^2$ instead of $\mathcal{J}$ for convenience, since the Haantjes method works best with quadratic integrals.)

\noindent \textbf{Step A: The Haantjes operator.}
Solving the chain equation $K_{\mathcal{J}^2}^T \, dH_{(N)} = \mathcal{J}\,d\mathcal{J}$ together with the symplectic compatibility \eqref{eq:symplectic_structure} and the Haantjes condition, we obtain the Haantjes operator associated with $\mathcal{J}^2$ (known from \cite[eq.~(77) with $c=0$]{nozaleda2022classical})
\be
\label{eq:haantjes_pphi}
K_{\mathcal{J}^2} = \begin{pmatrix}
q_2^2 & -q_1 q_2 & 0 & 0 \\[0.3em]
-q_1 q_2 & q_1^2 & 0 & 0 \\[0.3em]
0 & -(q_1 p_2 - q_2 p_1) & q_2^2 & -q_1 q_2 \\[0.3em]
(q_1 p_2 - q_2 p_1) & 0 & -q_1 q_2 & q_1^2
\end{pmatrix}.
\ee
This operator is semisimple, with eigenvalues
\be
\label{eq:eigenvalues_pphi}
\lambda_1 = q_1^2 + q_2^2, \qquad \lambda_2 = 0,
\ee
each of multiplicity two. A notable feature is that $K_{\mathcal{J}^2}$ does not depend on any of the parameters $\gamma_n$, reflecting the universal character of the angular momentum as an integral of all generalized Zernike Hamiltonians.

\medskip

\noindent \textbf{Step B: DH coordinates.}
The characteristic co-distributions \eqref{annih} of $K_{\mathcal{J}^2}$ are
\be
\label{eq:annih_pphi}
\begin{split}
\mathcal{E}_1^{\circ} &= \Big[ (q_1 p_2 - q_2 p_1) \, dq_1 - q_1 q_2 \, dp_1 + q_1^2 \, dp_2, \;\; -q_2 \, dq_1 + q_1 \, dq_2 \Big], \\[0.5em]
\mathcal{E}_2^{\circ} &= \Big[ {-(q_1 p_2 - q_2 p_1)} \, dq_1 + q_1 q_2 \, dp_1 + q_2^2 \, dp_2, \;\; q_1 \, dq_1 + q_2 \, dq_2 \Big].
\end{split}
\ee

\paragraph{Position coordinates.} Each co-distribution contains a $1$-form involving only the position differentials $dq_1, dq_2$. From $\mathcal{E}_1^{\circ}$, we integrate
\be
-q_2 \, dq_1 + q_1 \, dq_2 = q_1^2 \, d\!\left(\frac{q_2}{q_1}\right),
\ee
obtaining the ratio $q_2/q_1$. Since any function of $q_2/q_1$ is an equally valid characteristic coordinate, the natural choice is the polar angle $\varphi = \arctan(q_2/q_1)$. From $\mathcal{E}_2^{\circ}$, we have
\be
q_1 \, dq_1 + q_2 \, dq_2 = \tfrac{1}{2}\, d (q_1^2 + q_2^2)=\tfrac{1}{2}\, d r^2, \qquad r := \sqrt{q_1^2 + q_2^2},
\ee
giving the radial coordinate.

\paragraph{Conjugate momenta.} We seek an exact $1$-form $dp_\varphi \in \mathcal{E}_1^\circ$.  Setting
\be
dp_\varphi = \frac{1}{q_1}\bigl((q_1 p_2 - q_2 p_1)\,dq_1 - q_1 q_2\,dp_1 + q_1^2\,dp_2\bigr)
            -\frac{p_1}{q_1}\bigl(-q_2\,dq_1 + q_1\,dq_2\bigr),
\ee
one checks directly that this equals $p_2\,dq_1 - p_1\,dq_2 + q_1\,dp_2 - q_2\,dp_1 = d(q_1 p_2 - q_2 p_1)$, confirming $p_\varphi = q_1 p_2 - q_2 p_1$.  Similarly, from $\mathcal{E}_2^\circ$ one obtains $p_r = (q_1 p_1 + q_2 p_2)/\sqrt{q_1^2+q_2^2}$.  One verifies $\{r, p_r\} = \{\varphi, p_\varphi\} = 1$ and $\{r, p_\varphi\} = \{\varphi, p_r\} = 0$.

\paragraph{The canonical transformation.} Collecting the results, the DH coordinates for the pair $(H_{(N)}, \tfrac{1}{2}\mathcal{J}^2)$ are the standard polar coordinates on $T^*\mathbb{R}^2$:
\be
\label{eq:canonical_polar}
\boxed{
\begin{aligned}
r &= \sqrt{q_1^2 + q_2^2}, &\qquad p_r &= \frac{q_1 p_1 + q_2 p_2}{\sqrt{q_1^2 + q_2^2}}, \\[0.5em]
\varphi &= \arctan \frac{q_2}{q_1}, &\qquad p_\varphi &= q_1 p_2 - q_2 p_1.
\end{aligned}
}
\ee

\paragraph{Separated Hamiltonian.} Using $|\mathbf{p}|^2 = p_r^2 + p_\varphi^2/r^2$ and $\mathbf{q} \cdot \mathbf{p} = r \, p_r$, the Hamiltonian in polar coordinates reads
\be
\label{eq:HN_polar}
H_{(N)} = p_r^2 + \frac{p_\varphi^2}{r^2} + \sum_{n=1}^{N} \gamma_n \, (r \, p_r)^n,
\ee
and the integral of motion is simply $p_\varphi$ itself. The Hamilton--Jacobi equation separates upon setting $p_\varphi = \mathrm{const}$, reducing to a single ODE in the radial variable~$r$. This is the well-known polar separation of the Zernike system, here recovered by purely algebraic means from the Haantjes framework. We note that, unlike the DH coordinates of Subsection~\ref{subsec4.2}, the polar transformation \eqref{eq:canonical_polar} does not depend on any parameter $\gamma_n$ and provides separation for the entire family $\{H_{(N)}\}_{N \geq 1}$ simultaneously.

\begin{rem}[Curvature-blindness of the Haantjes construction]
\label{rem:curvature_blind}
The Haantjes operator $K_{\mathcal{J}^2}$ \eqref{eq:haantjes_pphi}, produced by \hyperref[stepA]{Step~A}, does not depend on $\gamma_2$, nor on any other parameter $\gamma_n$. The entire \hyperref[stepB]{Step~B} computation is therefore independent of the curvature $\kappa = -\gamma_2$ of the underlying space. In particular, the characteristic co-distributions \eqref{eq:annih_pphi} and all subsequent integration steps are the same for $\mathbb{E}^2$, $\mathbb{S}^2$, and $\mathbb{H}^2$.

As a consequence, the Haantjes method does not intrinsically select between the Euclidean radial coordinate $r = \sqrt{q_1^2 + q_2^2}$ and the geodesic distance $\rho$ related by $r = S_\kappa(\rho)$: both are admissible characteristic coordinates for the same co-distribution $\mathcal{E}_2^\circ$, since any strictly monotone function of $r$ is equally valid. The geometric distinction between flat polar coordinates $(r, \varphi)$ and geodesic polar coordinates $(\rho, \varphi)$ lies outside the reach of the Haantjes algebra.
\end{rem}

\paragraph{Integrals $I_1$, $I_2$ in polar coordinates, $N=2$.} While the polar Hamiltonian \eqref{eq:HN_polar} involves only $r$ and $p_r$ (for fixed $p_\varphi$), the remaining integrals $I_1$, $I_2$ \eqref{eq:addconst} take a natural factored form in polar coordinates. Substituting $q_1 = r\cos\varphi$, $q_2 = r\sin\varphi$ and the conjugate momenta \eqref{polcoor} into \eqref{eq:addconst}, one obtains
\begin{align}
I_1 &= (1+\gamma_2 r^2)\!\left(\cos\varphi\, p_r - \frac{\sin\varphi}{r}\, p_\varphi\right)^{\!2} + \gamma_1 r\cos\varphi\!\left(\cos\varphi\, p_r - \frac{\sin\varphi}{r}\, p_\varphi\right), \label{eq:I1_polar} \\[0.5em]
I_2 &= (1+\gamma_2 r^2)\!\left(\sin\varphi\, p_r + \frac{\cos\varphi}{r}\, p_\varphi\right)^{\!2} + \gamma_1 r\sin\varphi\!\left(\sin\varphi\, p_r + \frac{\cos\varphi}{r}\, p_\varphi\right). \label{eq:I2_polar}
\end{align}
These expressions simplify considerably in the geodesic polar coordinates $(\rho, \varphi, p_\rho, p_\varphi)$ of Subsection~\ref{sec2.2}, in which $H_{(2)}$ takes the natural form \eqref{eq:hamigeorada}. Using $r = S_\kappa(\rho)$, $1 + \gamma_2 r^2 = C_\kappa^2(\rho)$ and the canonical transformation therein, the integrals become
\be
\label{eq:I1_geodpolar}
I_1 = \left(\cos\varphi \, p_\rho - \frac{\sin\varphi}{T_\kappa(\rho)}\, p_\varphi\right)^{\!2} - \frac{\gamma_1^2}{4}\,T_\kappa^2(\rho)\cos^2\!\varphi \,,
\ee
\be
\label{eq:I2_geodpolar}
I_2 = \left(\sin\varphi \, p_\rho + \frac{\cos\varphi}{T_\kappa(\rho)}\, p_\varphi\right)^{\!2} - \frac{\gamma_1^2}{4}\,T_\kappa^2(\rho)\sin^2\!\varphi \,.
\ee
In this form, the structure is transparent: each integral is a ``partial energy'', decomposed into a kinetic contribution (the squared term) and a potential contribution $-\frac{\gamma_1^2}{4}T_\kappa^2(\rho)$ weighted by $\cos^2\varphi$ or $\sin^2\varphi$ respectively. In particular, $\gamma_2$ enters only through $\kappa = -\gamma_2$ in the curvature-dependent functions $S_\kappa$, $C_\kappa$, $T_\kappa$ \eqref{curvsine}--\eqref{ktan}. The consistency relation \eqref{eq:dep}, namely $I_1 + I_2 - \gamma_2\, p_\varphi^2 = H_{(2)}$, follows from the identity $1/T_\kappa^2(\rho) + \kappa = 1/S_\kappa^2(\rho)$.

\subsection{Cartesian-type separation associated with \texorpdfstring{$I_2$}{I2}}
\label{subsec4.2}

We now apply the computational procedure of Subsection~\ref{subsec3.2} to construct DH coordinates associated with the integral $I_2$. We work in Cartesian coordinates $(q_1, q_2, p_1, p_2)$ on the phase space $T^*\mathbb{R}^2$.

\medskip

\noindent \textbf{Step A: The Haantjes operator.}
Using a polynomial ansatz for the entries of the block structure \eqref{eq:symplectic_structure} and solving the chain equation $K_{I_2}^T \, dH_{(N)} = dI_2$ together with the Haantjes condition, we obtain the Haantjes operator $K_{I_2}$. For $N \leq 2$, the result is
\be
\label{eq:haantjes_I2}
K_{I_2} = \begin{pmatrix}
0 & 0 & 0 & 0 \\[0.3em]
-\gamma_2 q_1 q_2 & 1+ \gamma_2 q_1^2 & 0 & 0 \\[0.3em]
0 & -\gamma_2 q_1 p_2 & 0 & -\gamma_2 q_1 q_2 \\[0.3em]
\gamma_2 q_1 p_2 & 0 & 0 & 1+ \gamma_2 q_1^2
\end{pmatrix},
\ee
with eigenvalues $\lambda_1 = 0$ and $\lambda_2 = 1+ \gamma_2 q_1^2$, each of multiplicity two. This operator is semisimple, confirming the applicability of Theorem~\ref{thm:jacobi_haantjes}. Setting $\gamma_2 = 0$ recovers the $N = 1$ case
\be
\label{eq:haantjes_I1}
K_{I_2}\big|_{\gamma_2=0} = \operatorname{diag}(0, 1, 0, 1).
\ee

\medskip

\noindent \textbf{Nijenhuis generator.}
For systems with two degrees of freedom, we consider the operator \cite[Remarks~3 and~4]{nozaleda2022classical}
\be
\label{eq:nijenhuis_generator}
\mathcal{N}_\alpha = K_\alpha - \tfrac{1}{2} \operatorname{tr}(K_\alpha) \, I.
\ee
If $\mathcal{N}_\alpha$ is a Nijenhuis operator, then the Hamiltonian system admits a quasi-bi-Hamiltonian formulation \cite{morosi1997,morosi1998,tondo1999}, and the eigenvalues of $\mathcal{N}_\alpha$ are characteristic functions of the Haantjes web, i.e., they directly provide the new position coordinates. Applying \eqref{eq:nijenhuis_generator} to \eqref{eq:haantjes_I2}, we obtain
\be
\label{eq:nijenhuis_I2}
\mathcal{N}_{I_2} = \begin{pmatrix}
-(1+ \gamma_2 q_1^2) & 0 & 0 & 0 \\[0.3em]
-\gamma_2 q_1 q_2 & 0 & 0 & 0 \\[0.3em]
0 & -\gamma_2 q_1 p_2 & -(1+ \gamma_2 q_1^2) & -\gamma_2 q_1 q_2 \\[0.3em]
\gamma_2 q_1 p_2 & 0 & 0 & 0
\end{pmatrix},
\ee
which we have verified to be a Nijenhuis operator for $N \leq 2$, with eigenvalues
\be
\label{eq:eigenvalues_N2}
\lambda_1 = -(1+ \gamma_2 q_1^2), \qquad \lambda_2 = 0.
\ee
A key observation is that these eigenvalues depend only on the coordinate $q_1$ and not on the momenta. This means that the new position coordinates $(Q_1,Q_2)$ will be functions of $(q_1, q_2)$ alone---an extended point transformation will be sufficient for separation. As we show in Section~\ref{sec:no_ept}, this is a special feature of $N \leq 2$ that does not persist for $N \geq 3$.

\medskip

\noindent \textbf{The case $\boldsymbol{N = 1}$.}
Setting $\gamma_2 = 0$, the Haantjes operator \eqref{eq:haantjes_I1} is already diagonal in Cartesian coordinates, and the Nijenhuis generator reduces to $\mathcal{N}_{I_2}\big|_{\gamma_2=0} = \operatorname{diag}(-1, 0, -1, 0)$. This means that $(q_1, q_2, p_1, p_2)$ are themselves DH coordinates for the pair $(H_{(1)}, I_2)$, and the Hamiltonian and integral are already in separated form
\be
\label{eq:H1_separated}
H_{(1)} = \underbrace{(p_1^2 + \gamma_1 q_1 p_1)}_{I_1} + \underbrace{(p_2^2 + \gamma_1 q_2 p_2)}_{ I_2}.
\ee
The integrals in these DH coordinates are
\be
\label{eq:I1_separated}
I_1 = p_1^2 + \gamma_1 q_1 p_1, \qquad  I_2 = p_2^2 + \gamma_1 q_2 p_2.
\ee
By the $q_1 \leftrightarrow q_2$ symmetry of $H_{(1)}$, the companion integral $I_1$ admits the Haantjes operator $K_{I_1} = \operatorname{diag}(1, 0, 1, 0)$. Since $K_{I_2}\big|_{\gamma_2=0} + K_{I_1} = I$, this is consistent with the relation $H_{(1)} = I_1 + I_2$.

We also note that the canonical transformation $Q_i = q_i$, $P_i = p_i + \frac{\gamma_1}{2} q_i$ brings $H_{(1)}$ into the isotropic harmonic oscillator form
\be
\label{eq:H1_oscillator}
H_{(1)} = P_1^2 + P_2^2 - \frac{\gamma_1^2}{4}(Q_1^2 + Q_2^2),
\ee
with integrals
\be
\label{eq:I1_oscillator}
I_2 = P_2^2 - \frac{\gamma_1^2}{4} Q_2^2, \qquad I_1 = P_1^2 - \frac{\gamma_1^2}{4} Q_1^2.
\ee
For $\gamma_1 = \imath\omega$ with $\omega \in \mathbb{R}$, these reduce to the standard harmonic oscillator Hamiltonian and its partial energy integrals.

\medskip

\noindent \textbf{The case $\boldsymbol{N = 2}$:}
The eigenvalue $\lambda_2 = 0$ of the Nijenhuis operator \eqref{eq:nijenhuis_I2} cannot serve as a separation coordinate (a constant function cannot parametrise a coordinate chart), so the Nijenhuis shortcut of Remark~\ref{rem:nijenhuis_simplification} provides only one useful coordinate. We therefore apply the full \hyperref[stepB]{Step~B} algorithm.

The characteristic co-distributions \eqref{annih} for the operator $\mathcal{N}_{I_2}$ are
\be
\label{eq:annih_N2}
\begin{split}
\mathcal{E}_1^{\circ} &= \Big[ \gamma_2 q_1 p_2 \, dq_2 + (1 + \gamma_2 q_1^2) \, dp_1 + \gamma_2 q_1 q_2 \, dp_2, \;\; dq_1 \Big], \\[0.5em]
\mathcal{E}_2^{\circ} &= \Big[ \gamma_2 q_1 p_2 \, dq_1 + (1 + \gamma_2 q_1^2) \, dp_2, \;\; -\gamma_2 q_1 q_2 \, dq_1 + (1 + \gamma_2 q_1^2) \, dq_2 \Big].
\end{split}
\ee
A crucial feature of \eqref{eq:annih_N2} is that $\mathcal{E}_2^{\circ}$ contains a $1$-form involving only the coordinates $q_1, q_2$ (the second basis element), while $\mathcal{E}_1^{\circ}$ contains the exact form $dq_1$. This confirms that we can obtain the DH coordinates by an extended point transformation.

\paragraph{Position coordinates.} From $\mathcal{E}_1^{\circ}$ we immediately read off
\be
Q_1 = q_1.
\ee
From the second basis element of $\mathcal{E}_2^{\circ}$, we integrate
\be
\frac{1}{(1+\gamma_2 q_1^2)\,q_2}\Big(-\gamma_2 q_1 q_2\,dq_1 + (1+\gamma_2 q_1^2)\,dq_2\Big) = d\!\left(\ln q_2 - \tfrac{1}{2}\ln(1+\gamma_2 q_1^2)\right),
\ee
and, exponentiating (any strictly monotone function of a characteristic coordinate is again a valid coordinate), obtain
\be
\label{eq:Q2}
Q_2 = \frac{q_2}{\sqrt{1 + \gamma_2 q_1^2}}.
\ee

\paragraph{Conjugate momenta.} From $\mathcal{E}_2^{\circ}$, the first basis element $\gamma_2 q_1 p_2 \, dq_1 + (1 + \gamma_2 q_1^2) \, dp_2$ is proportional to $d\!\big(p_2 \sqrt{1 + \gamma_2 q_1^2}\big)$, giving
\be
\label{eq:P2}
P_2 = p_2 \sqrt{1 + \gamma_2 q_1^2},
\ee
which is already canonically normalized: $\{Q_2, P_2\} = 1$.

For $P_1$, we require $\{Q_1, P_1\} = 1$, which forces $\partial P_1 / \partial p_1 = 1$, so $P_1 = p_1 + f(q_1, q_2, p_2)$ for some function $f$. Demanding that $dP_1 \in \mathcal{E}_1^{\circ}$ and exactness of $dP_1$ yields
\be
\label{eq:P1}
P_1 = p_1 + \frac{\gamma_2 q_1 q_2 \, p_2}{1 + \gamma_2 q_1^2}.
\ee

\paragraph{The canonical transformation.} Collecting the results, the DH coordinates for the pair $(H_{(2)}, I_2)$ are given by the extended point transformation
\be
\label{eq:canonical_N2}
\boxed{
\begin{aligned}
Q_1 &= q_1, &\qquad P_1 &= p_1 + \frac{\gamma_2 q_1 q_2 \, p_2}{1 + \gamma_2 q_1^2}, \\[0.5em]
Q_2 &= \frac{q_2}{\sqrt{1 + \gamma_2 q_1^2}}, &\qquad P_2 &= p_2 \sqrt{1 + \gamma_2 q_1^2}.
\end{aligned}
}
\ee
One verifies that this is a canonical transformation: $\{Q_i, Q_j\} = \{P_i, P_j\} = 0$ and $\{Q_i, P_j\} = \delta_{ij}$. In the limit $\gamma_2 \to 0$, the transformation \eqref{eq:canonical_N2} reduces to the identity, recovering the Cartesian DH coordinates of the $N = 1$ case.

\begin{rem}[Geometric interpretation of the DH coordinates]
\label{rem:geometric_interpretation}
The canonical transformation \eqref{eq:canonical_N2} admits a natural geometric interpretation in terms of the coordinate systems introduced by Pogosyan, Wolf, and Yakhno~\cite{PWY2017} for the separation of the Hamilton--Jacobi equation on the sphere and the hyperboloid.

\smallskip
\noindent\emph{Case $\gamma_2 < 0$ (sphere).}
The Zernike system is defined on the disk $q_1^2 + q_2^2 < R^2$, $R^2 = 1/|\gamma_2|$, which is the orthogonal or vertical projection of the hemisphere $\xi_1^2 + \xi_2^2 + \xi_3^2 = R^2$, $\xi_3 \geq 0$, via $\xi_1 = q_1$, $\xi_2 = q_2$ and $\xi_3 = \sqrt{R^2-q_1^2-q_2^2}$. In the System~II spherical coordinates of~\cite{PWY2017},%
\be
\xi_1 = R\cos\vartheta, \quad \xi_2 = R\sin\vartheta\cos\varphi, \quad \xi_3 = R\sin\vartheta\sin\varphi,
\ee
the key identity $1 + \gamma_2 q_1^2 = 1 - \cos^2\vartheta = \sin^2\vartheta$ gives (choosing the positive root in $Q_2$)
\be
Q_1 = R\cos\vartheta, \qquad Q_2 = \frac{R\sin\vartheta\cos\varphi}{\sin\vartheta} = R\cos\varphi.
\ee
Thus $Q_1$ determines the polar angle~$\vartheta$ and $Q_2$ determines the azimuthal angle~$\varphi$, while $P_2 = p_2\sin\vartheta$ is the momentum conjugate to the projected azimuthal coordinate. The correction term in $P_1$ ensures canonicity of the full transformation, absorbing the cross-term from the $(\mathbf{q}\cdot\mathbf{p})^2$ coupling.

\smallskip
\noindent\emph{Case $\gamma_2 > 0$ (hyperboloid).}
When $\gamma_2 > 0$, the system is defined on the full plane $\mathbb{R}^2$, which is the vertical projection of the upper sheet of the two-sheeted hyperboloid $\xi_3^2 - \xi_1^2 - \xi_2^2 = \varrho^2$, $\varrho^2 = 1/\gamma_2$. In the System~HII equidistant coordinates of~\cite{PWY2017},%
\be
\xi_1 = \varrho\sinh\tau_1, \quad \xi_2 = \varrho\cosh\tau_1\sinh\tau_2, \quad \xi_3 = \varrho\cosh\tau_1\cosh\tau_2,
\ee
one has $1 + \gamma_2 q_1^2 = \cosh^2\tau_1$ and the analogous identification $Q_1 = \varrho\sinh\tau_1$, $Q_2 = \varrho\sinh\tau_2$ holds.

\smallskip
Moreover, the integral $I_2$ in original coordinates,
\be
I_2 = \big[1 + \gamma_2(q_1^2 + q_2^2)\big]\,p_2^2 + \gamma_1\,q_2\,p_2,
\ee
coincides with the separation constant $K_{II}^2$ from~\cite[Eq.~(59)]{PWY2017} (under the identification $\gamma_2 \leftrightarrow \alpha$, $\gamma_1 \leftrightarrow -i\beta$). The Haantjes construction thus recovers the System~II/HII separation of the Hamilton--Jacobi equation by purely algebraic means.
\end{rem}

\paragraph{Separated Hamiltonian and integral.} Expressing $H_{(2)}$ in the DH coordinates \eqref{eq:canonical_N2}, one obtains
\be
\label{eq:H2_separated}
H_{(2)} = P_1^2 + \gamma_1 Q_1 P_1 + \gamma_2 (Q_1 P_1)^2 + \frac{P_2^2 + \gamma_1 Q_2 P_2 + \gamma_2 (Q_2 P_2)^2}{1 + \gamma_2 Q_1^2},
\ee
where the integral of motion takes the manifestly separated form
\be
\label{eq:I2_separated}
I_2 = P_2^2 + \gamma_1 Q_2 P_2 + \gamma_2 (Q_2 P_2)^2.
\ee
Integral $I_1$ has a more complicated, nonseparable expression in these coordinates that we omit.

Thus, $H_{(2)}$ separates as a sum of two contributions, one depending on $(Q_1, P_1)$ and the other on $(Q_2, P_2)$ divided by a conformal factor $1 + \gamma_2 Q_1^2$. This generalizes the $N = 1$ pattern \eqref{eq:H1_separated}: setting $\gamma_2 = 0$ in \eqref{eq:H2_separated} recovers $H_{(1)} = (P_1^2 + \gamma_1 Q_1 P_1) + (P_2^2 + \gamma_1 Q_2 P_2)$, with $Q_i = q_i$ and $P_i = p_i$.

\subsection{Quadratic integrals of motion \texorpdfstring{$I_1$, $I_2$}{I1, I2} and corresponding Cartesian-type separation of variables for \texorpdfstring{$H_{(2)}$}{H(2)}}
\label{subsec4.3}

The Hamiltonian $H_{(2)}$ admits a second Cartesian-type separation, associated with the integral $I_1$ in place of $I_2$. The analysis is entirely parallel to Subsection~\ref{subsec4.2}: the map $(q_1, q_2, p_1, p_2) \mapsto (q_2, q_1, p_2, p_1)$ is a symmetry of $H_{(2)}$ that interchanges $I_1 \leftrightarrow I_2$, and applying it to all constructions of Subsection~\ref{subsec4.2} yields the results for $I_1$ without further computation.

\medskip

\noindent\textbf{Step A: The Haantjes operator.} The Haantjes operator for the pair $(H_{(2)}, I_1)$ is
\be
\label{eq:haantjes_I1_cart}
K_{I_1} = \begin{pmatrix}
1+ \gamma_2 q_2^2 & -\gamma_2 q_1 q_2 & 0 & 0 \\[0.3em]
0 & 0 & 0 & 0 \\[0.3em]
0 & \gamma_2 q_2 p_1 & 1+ \gamma_2 q_2^2 & 0 \\[0.3em]
-\gamma_2 q_2 p_1 & 0 & -\gamma_2 q_1 q_2 & 0
\end{pmatrix},
\ee
with eigenvalues $\lambda_1 = 1 + \gamma_2 q_2^2$ (double) and $\lambda_2 = 0$ (double). Applying \eqref{eq:nijenhuis_generator} gives the Nijenhuis generator
\be
\label{eq:nijenhuis_I1}
\mathcal{N}_{I_1} = \begin{pmatrix}
0 & -\gamma_2 q_1 q_2 & 0 & 0 \\[0.3em]
0 & -(1+ \gamma_2 q_2^2) & 0 & 0 \\[0.3em]
0 & \gamma_2 q_2 p_1 & 0 & 0 \\[0.3em]
-\gamma_2 q_2 p_1 & 0 & -\gamma_2 q_1 q_2 & -(1+ \gamma_2 q_2^2)
\end{pmatrix},
\ee
with eigenvalues $\mu_1 = 0$ and $\mu_2 = -(1+ \gamma_2 q_2^2)$.

\medskip

\noindent\textbf{Step B: The canonical transformation.}
Applying the algorithm of Subsection~\ref{subsec3.2} to the characteristic co-distributions of $\mathcal{N}_{I_1}$ gives the DH coordinates for the pair $(H_{(2)}, I_1)$:
\be
\label{eq:canonical_I1}
\boxed{
\begin{aligned}
Q_1 &= \frac{q_1}{\sqrt{1 + \gamma_2 q_2^2}}, &\qquad P_1 &= p_1 \sqrt{1 + \gamma_2 q_2^2}, \\[0.5em]
Q_2 &= q_2, &\qquad P_2 &= p_2 + \frac{\gamma_2 q_1 q_2 \, p_1}{1 + \gamma_2 q_2^2}.
\end{aligned}
}
\ee
One verifies that \eqref{eq:canonical_I1} is canonical. In the limit $\gamma_2 \to 0$ it reduces to the identity, recovering the Cartesian DH coordinates of the $N = 1$ case.

\paragraph{Separated Hamiltonian and integral.} In the DH coordinates \eqref{eq:canonical_I1},
\be
\label{eq:I1_separated_cart}
I_1 = P_1^2 + \gamma_1 Q_1 P_1 + \gamma_2 (Q_1 P_1)^2,
\ee
\be
\label{eq:H2_separated_I1}
H_{(2)} = \frac{P_1^2 + \gamma_1 Q_1 P_1 + \gamma_2 (Q_1 P_1)^2}{1 + \gamma_2 Q_2^2} + P_2^2 + \gamma_1 Q_2 P_2 + \gamma_2 (Q_2 P_2)^2.
\ee
The structure mirrors \eqref{eq:H2_separated}: now $I_1$ is the separated ``first factor'' and appears divided by the conformal factor $1 + \gamma_2 Q_2^2$ in the Hamiltonian. Comparing \eqref{eq:H2_separated} and \eqref{eq:H2_separated_I1}, the two canonical transformations \eqref{eq:canonical_N2} and \eqref{eq:canonical_I1} are related by $q_1 \leftrightarrow q_2$, $p_1 \leftrightarrow p_2$, confirming the $\mathbb{Z}_2$ symmetry of the problem.

\begin{rem}
\label{rem:pattern}
Comparing the separated forms \eqref{eq:H1_separated}, \eqref{eq:I2_separated}, and \eqref{eq:I1_separated_cart}, a pattern emerges: in DH coordinates adapted to the integral $I_k$ ($k = 1, 2$), the separated form of $I_k$ is
\be
\label{eq:IN_pattern}
I_k = P_k^2+\sum_{n=1}^{N} \gamma_n (Q_k P_k)^n,
\ee
while the Hamiltonian carries a conformal factor $1 + \gamma_2 Q_{3-k}^2$ in the remaining term. We conjecture that this form persists to $N\geq 3$. However, by Theorem~\ref{thm:no_ept}, the DH coordinates for $I_k$ cannot be reached by an EPT when $N\geq 3$; therefore the transformation $(q,p)\to(Q,P)$ realising the conjectured form would have to be momentum-dependent.
\end{rem}

\subsection{Elliptic separation: algebraic DH coordinates and Heun equations}
\label{subsec:elliptic}

Fix an interfocal parameter $f\in(0,\pi/2)$ and set $k_1 := \cos f$, $k_2 := \sin f$.
The separation constant for elliptic coordinates on $\mathbb{S}^2$ is \cite[eq.~(72)]{PWY2017}
\be
  I_e := -\gamma_2 k_1^2\,\mathcal{J}^2 - k_2^2\,I_2.
\ee
By linearity of the chain equation \eqref{eq:chain_equations}, the operator
\begin{equation}\label{eq:K_e_def}
  K_e := -\gamma_2 k_1^2\,K_{\mathcal{J}^2} - k_2^2\,K_{I_2}
\end{equation}
satisfies $K_e^T\,dH_{(2)} = dI_e$.  That $K_e$ is itself a Haantjes operator is not automatic from the linearity of the chain equation (the Haantjes condition is nonlinear in $K$), but follows from a direct computation

\begin{prop}[Elliptic Haantjes operator]
\label{prop:Ke_haantjes}
The operator $K_e = -\gamma_2 k_1^2\,K_{\mathcal{J}^2} - k_2^2\,K_{I_2}$ is Haantjes, i.e., $\mathcal{H}_{K_e}\equiv 0$.  Together with the identity operator it generates a semisimple Abelian $\omega\mathscr{H}$ structure for the pair $(H_{(2)}, I_e)$.
\begin{proof}
Direct computation using the explicit entries of $K_e$ (obtained by substituting the formulas for $K_{\mathcal{J}^2}$ and $K_{I_2}$) shows that all components of the Haantjes torsion \eqref{eq:haantjes_torsion} vanish identically.  The operator is semisimple because its eigenvalues $\tilde\lambda_1 \neq \tilde\lambda_2$ are distinct on the open dense set $\{S \neq 0\}$.
\end{proof}
\end{prop}

\noindent Since a linear combination of lift-form operators is again of lift form ($B=0$), Proposition~\ref{prop:a_block} applies: the DH coordinates are determined by the $A$-block alone and can be reached by an EPT.

\paragraph{The $A$-block and characteristic polynomial.}
\begin{equation}\label{eq:Ae_charpoly}
  A_e = \begin{pmatrix}
    -\gamma_2 k_1^2 q_2^2 & \gamma_2 k_1^2 q_1 q_2 \\[4pt]
    \gamma_2 q_1 q_2 & -\gamma_2 q_1^2 - k_2^2
  \end{pmatrix}, \qquad
  \lambda^2 + T\lambda + P = 0,
\end{equation}
with
\begin{align}
\label{eq:TPS_def}
  T &:= \gamma_2(q_1^2 + k_1^2 q_2^2) + k_2^2, \notag\\
  P &:= \gamma_2 k_1^2 k_2^2 q_2^2, \notag\\
  S &:= \sqrt{T^2 - 4P}.
\end{align}

\paragraph{DH coordinates.}
The eigenvalues of $A_e$ are $\lambda_{1,2} = \tfrac{1}{2}(-T \pm S)$.  The Nijenhuis operator $\mathcal{N}_e = K_e - \tfrac{1}{2}\operatorname{tr}(K_e)\,I$ has eigenvalues $\tilde\lambda_{1,2} = -\lambda_{2,1} = \tfrac{1}{2}(T \pm S)$, which by Remark~\ref{rem:nijenhuis_simplification} are the DH coordinates
\begin{equation}\label{eq:DH_elliptic}
  Q_{1,2} = \tilde\lambda_{1,2}(q_1,q_2) = \tfrac{1}{2}(T \pm S), \qquad \tilde\lambda_1 > \tilde\lambda_2.
\end{equation}
Unlike the Cartesian case (where $Q_1 = q_1$ is rational), both $\tilde\lambda_{1,2}$ depend on both variables through $S$, making them \emph{algebraic of degree~$2$} but not rational.

\paragraph{New momenta.}
Because $K_e$ induces an EPT, the canonical transformation has generating function $F_2(q,P) = P_1 Q_1(q) + P_2 Q_2(q)$, giving $\mathbf{p} = J^T\mathbf{P}$ with $J_{ij} = \partial\tilde\lambda_i/\partial q_j$.  Hence $\mathbf{P} = (J^{-1})^T\mathbf{p}$, and the new momenta are \emph{linear} in the old momenta $p_j$ with algebraic coefficients in $q$.  The Jacobian $J$ is invertible wherever $\tilde\lambda_1 \neq \tilde\lambda_2$, i.e.\ wherever $S = \sqrt{T^2-4P} \neq 0$; the locus $S=0$ is the set of focal points of the confocal family~\eqref{eq:confocal_level_sets_paper}, where the two families of conics become tangent.
The square root is always real: $T^2-4P \geq 0$ throughout $\mathbb{R}^2$ for any sign of $\gamma_2$.  For $\gamma_2 \leq 0$ this is immediate since $P = \gamma_2 k_1^2 k_2^2 q_2^2 \leq 0$.  For $\gamma_2 > 0$ the discriminant factors as
\begin{equation}\label{eq:discriminant_factored}
  T^2-4P
  = \bigl[\gamma_2 q_1^2 + (\sqrt{\gamma_2}\,k_1 q_2+k_2)^2\bigr]
    \bigl[\gamma_2 q_1^2 + (\sqrt{\gamma_2}\,k_1 q_2-k_2)^2\bigr] \geq 0,
\end{equation}
vanishing only at the two isolated focal points $(q_1,q_2)=(0,\pm k_2/(\sqrt{\gamma_2}\,k_1))$.  Hence the elliptic DH coordinates are well-defined on all of $\mathbb{R}^2$ minus the focal locus, with no additional domain restriction required for the hyperboloid case.

\paragraph{Characteristic forms.}
By Proposition~\ref{prop:a_block}(iii), the left eigenvectors of $A_e$ give characteristic $1$-forms of $K_e$.  Solving $\sigma A_e = \lambda_i\,\sigma$ with the normalisation $\sigma_1 = \gamma_2 q_1 q_2$ yields
\begin{equation}\label{eq:sigma_elliptic}
  \sigma_i = \gamma_2 q_1 q_2\,dq_1 + (\gamma_2 k_1^2 q_2^2 + \lambda_i)\,dq_2.
\end{equation}
Note that $\sigma_i$ contains $\lambda_i$, and hence $S = \sqrt{T^2-4P}$, in the $dq_2$-coefficient. Direct integration is therefore not feasible.

\begin{rem}[The square root cannot be removed]
\label{rem:sqrt}
In the polar and Cartesian-typew cases the DH coordinates involve a square root of a single-variable rational function (e.g.\ $Q_1 = q_1/\sqrt{1+\gamma_2 q_2^2}$ for $I_1$), which can be eliminated by reparametrisation: setting $\tilde Q_1 = Q_1^2$ and $\tilde P_1 = P_1/(2Q_1)$ yields a fully rational canonical transformation that still separates the Hamilton--Jacobi equation.  For the elliptic coordinates no such reduction exists: both $Q_1$ and $Q_2$ depend on the same square root $S = \sqrt{T^2-4P}$, and any reparametrisation $\tilde Q_i = f_i(Q_i)$ that rationalises one coordinate leaves the other algebraic. Replacing $(Q_1,Q_2)$ by their symmetric functions $(T,P)$ is rational but destroys the Stäckel separation.  This is intrinsic to confocal elliptic coordinates, which are the two roots of a quadratic in the eigenvalue variable; the square root persists in any parametrisation, including the trigonometric form \cite[eq.~(66)]{PWY2017}.
\end{rem}

\paragraph{Level sets and connection to the sphere.}
Eliminating $S$ from $Q_1 = \tilde\lambda_1$ gives the level-set equation
\begin{equation}
\label{eq:confocal_level_sets_paper}
  \frac{\gamma_2\,q_1^2}{\tilde\lambda_1 - k_2^2}
  + \frac{\gamma_2 k_1^2\,q_2^2}{\tilde\lambda_1} = 1,
\end{equation}
a confocal ellipse for $\tilde\lambda_1 > k_2^2$ and a hyperbola for $0 < \tilde\lambda_1 < k_2^2$.
These are the gnomonic projections ($q_i = \xi_i/\xi_3$) of the spherical conics
\begin{equation}\label{eq:spherical_conics}
  \frac{\xi_1^2}{A} + \frac{k_1^2\,\xi_2^2}{B} = \xi_3^2, \qquad
  A = \frac{\tilde\lambda_1-k_2^2}{\gamma_2},\quad B = \frac{\tilde\lambda_1}{\gamma_2},
\end{equation}
on $S^2_R$; substituting $\xi_i = q_i\xi_3$ and dividing by $\xi_3^2$ recovers~\eqref{eq:confocal_level_sets_paper}.
The Haantjes machinery produces the confocal family without assuming in advance that elliptic coordinates are the correct ones.

\paragraph{Separated Hamiltonians and quantum ODE types.}
The DH coordinates produced by the Haantjes algorithm also separate the quantum Schrödinger equation for $H_{(2)}$ \cite{PWY2017, PSAWY2017}.  Under Weyl quantisation ($p_j\mapsto -i\hbar\partial_{q_j}$), the classical Hamiltonian \eqref{eq:hamZK} becomes the differential operator
\begin{equation}\label{eq:Zernike_Schrodinger}
  \hat{H}_{(2)} = -\hbar^2\Delta + \gamma_1(-i\hbar)\,\mathbf{q}\cdot\nabla + \gamma_2(-i\hbar)^2(\mathbf{q}\cdot\nabla)^2,
\end{equation}
where $\Delta = \partial_{q_1}^2+\partial_{q_2}^2$ and $\mathbf{q}\cdot\nabla = q_1\partial_{q_1}+q_2\partial_{q_2}$; the eigenvalue equation $\hat{H}_{(2)}\Psi = E\Psi$ separates in each of the coordinate systems below.  The type of the resulting ODE is determined by the structure of the separated Hamiltonian.

\begin{center}
\begin{tabular}{lll}
\hline
Separation & DH positions & Separated quantum ODE \\
\hline
Polar ($\mathcal{J}^2$) & $r$ (rational) & hypergeometric, 3 reg.\ sing.; Jacobi poly.\ \cite{PSAWY2017} \\
Cartesian ($I_2$) & $q_1,\; q_2/\sqrt{1+\gamma_2 q_1^2}$ (rational) & hypergeometric, 3 reg.\ sing.; Gegenbauer$\,{\times}\,$Legendre \cite{PSAWY2017} \\
Cartesian ($I_1$) & $q_1/\sqrt{1+\gamma_2 q_2^2},\; q_2$ (rational) & hypergeometric, 3 reg.\ sing.; Gegenbauer$\,{\times}\,$Legendre \cite{PSAWY2017} \\
Elliptic ($I_e$) & $\tfrac{1}{2}(-T\pm S)$ (algebraic, deg.~2) & Heun, 4 reg.\ sing.\ \cite{Ataki18,Ataki19b} \\
\hline
\end{tabular}
\end{center}

\begin{rem}[The Heun class]
A second-order linear ODE is said to belong to the \emph{Heun class} if it is Fuchsian with exactly four regular singular points (including $\infty$).  This is the first class beyond the hypergeometric equation (three regular singular points) and the generic member cannot be solved in terms of classical special functions.
\end{rem}

\noindent\emph{Polar.} The separated Hamiltonian~\eqref{eq:HN_polar} for $N=2$ is
\begin{equation}\label{eq:H2_polar_disc}
  H_{(2)} = (1 + \gamma_2 r^2)\,p_r^2 + \frac{p_\varphi^2}{r^2} + \gamma_1 r\,p_r.
\end{equation}
Upon quantisation, the radial eigenvalue equation has leading coefficient $(1+\gamma_2 r^2)$, which vanishes at the boundary $r = R = 1/\sqrt{|\gamma_2|}$, and a centrifugal term $+m^2/r^2$ singular at $r=0$. The three regular singular points $r = 0$, $r = R$, $r = \infty$ classify the ODE as hypergeometric; solutions are Jacobi polynomials \cite{PSAWY2017}.

\noindent\emph{Elliptic: separated Hamiltonian.}
In the DH coordinates $(\tilde\lambda_1,\tilde\lambda_2,P_1,P_2)$ with $\tilde\lambda_1>k_2^2>\tilde\lambda_2>0$,
a direct computation using the generating function $F_2 = P_1\tilde\lambda_1(q)+P_2\tilde\lambda_2(q)$
yields the Stäckel form
\begin{equation}
\label{eq:H2_separated_elliptic}
  H_{(2)} = \frac{h(\tilde\lambda_1)\,P_1^2 - h(\tilde\lambda_2)\,P_2^2
             + g(\tilde\lambda_1)\,P_1 - g(\tilde\lambda_2)\,P_2}{\tilde\lambda_1-\tilde\lambda_2},
\end{equation}
where
\begin{equation}
\label{eq:h_g_elliptic}
  h(\lambda) := 4\gamma_2\,\lambda(\lambda-k_2^2)(\lambda+k_1^2),
  \qquad
  g(\lambda) := 2\gamma_1\,\lambda(\lambda-k_2^2).
\end{equation}
The denominator $\tilde\lambda_1-\tilde\lambda_2 = S\neq 0$ on the open set where the coordinate map
is defined ($S=0$ is the focal locus).  The structure mirrors
\eqref{eq:H2_separated}--\eqref{eq:H2_separated_I1}: the Stäckel weight $h(\lambda)$ encodes the
confocal geometry, while $g(\lambda)$ carries the $\gamma_1$ correction.

\noindent\emph{Elliptic: quantum ODE type.}
Upon quantisation, the eigenvalue equation in the $\tilde\lambda_1$-variable has leading coefficient
$h(\tilde\lambda_1)/(\tilde\lambda_1-\tilde\lambda_2)$.  The polynomial $h(\lambda) = 4\gamma_2\lambda(\lambda-k_2^2)(\lambda+k_1^2)$
has three finite zeros at $\lambda=0$, $\lambda=k_2^2$, and $\lambda=-k_1^2$; together with
$\lambda\to\infty$ these give four regular singular points, placing the eigenvalue equation in the
Heun class \cite{Ataki18,Ataki19b}.

This is the most complex separation accessible within the $N=2$ EPT class; the next section shows that for $N\geq 3$ even the EPT structure itself breaks down.

\section{Extended point transformations for \texorpdfstring{$N \geq 3$}{N >= 3}}
\label{sec:no_ept}

\noindent The Cartesian-type DH coordinates constructed in Subsections~\ref{subsec4.2}--\ref{subsec4.3} are extended point transformations (EPTs): the new position coordinates $Q_i$ depend only on $(q_1, q_2)$, not on the momenta.  We now show that this feature is special to $N \leq 2$: for $N \geq 3$ the only EPT that brings $H_{(N)}$ to Stäckel form leads to polar coordinates, and the corresponding separated integral is $\mathcal{J}^2$ rather than $I_1$ or $I_2$.

\begin{Theo}
\label{thm:no_ept}
Let $N\geq 3$ and suppose $\gamma_k\neq 0$ for some $3\leq k\leq N$.  Any extended
point transformation $Q=Q(q)$ that brings $H_{(N)}$ to separated form
is a polar-type transformation: up to reparametrisation,
$(Q_1,Q_2)=(f(\varphi),g(r))$ for smooth functions $f$ and $g$.  The
corresponding separated first integral is $p_\varphi^2=\mathcal{J}^2$.
Consequently, the integrals $I_1,I_2$ cannot be separated by any extended
point transformation.
\end{Theo}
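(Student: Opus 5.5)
Under an extended point transformation $Q=Q(q)$ with $\mathbf p = J^T\mathbf P$, $J=\partial Q/\partial q$, the momenta transform linearly, so the homogeneous pieces of $H_{(N)}$ stay homogeneous in $\mathbf P$. Write $H_{(N)}=\sum_{n=0}^N H^{[n]}$ with $H^{[n]}$ homogeneous of degree $n$ in the momenta: $H^{[1]}=\gamma_1 v\cdot\mathbf P$, $H^{[2]}=|J^T\mathbf P|^2+\gamma_2(v\cdot\mathbf P)^2$, and $H^{[n]}=\gamma_n (v\cdot \mathbf P)^n$ for $n\ge3$. Here $v:=Jq$ is the vector field $q\cdot\nabla$ (the Euler field) written in the new coordinates, since $\mathbf q\cdot\mathbf p=\mathbf q\cdot J^T\mathbf P=(Jq)\cdot\mathbf P$.

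The plan is to impose separability degree by degree, using the Stäckel/Levi-Civita criterion. Separability of $H$ in Darboux coordinates $(Q,P)$ is equivalent to the Levi-Civita conditions
\[
\partial_i H\,\partial_j H\,\partial^i\partial^j H+\partial^iH\,\partial^jH\,\partial_i\partial_jH-\partial_iH\,\partial^jH\,\partial^i\partial_jH-\partial^iH\,\partial_jH\,\partial_i\partial^jH=0,\qquad i\ne j,
\]
with $\partial^i=\partial/\partial P_i$.
\begin{itemize}
\item[(a)] These conditions are polynomial in $\mathbf P$ once $H$ is polynomial, so every $\mathbf P$-homogeneous component must vanish separately.
\item[(b)] The top-degree component, of degree $4N-2$, comes only from $H^{[N]}=\gamma_N(v\cdot\mathbf P)^N$ if $\gamma_N\neq0$; otherwise take the largest $k$ with $\gamma_k\ne0$, $k\ge3$.
\item[(c)] Substitute $u=v\cdot \mathbf P=v^1P_1+v^2P_2$. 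A direct calculation reduces the top-degree Levi-Civita condition for $u^k$ to an identity between the products $v^1v^2$, the derivatives $\partial_j v^i$, and powers of $u$. Comparing the coefficients of the various monomials $P_1^aP_2^b$ should force either $v^1v^2\equiv0$, or $\partial_1 v^2=\partial_2 v^1=0$ together with a product structure.
\end{itemize}
The expected conclusion is that $v=Jq$ must be aligned with a coordinate axis: after relabelling, $v^1\equiv0$ and $v^2=v^2(Q_2)$. This is exactly the alignment claim anticipated before Proposition~\ref{prop:a_block}.

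The case $k\ge3$ is essential here. For $k=2$ the top term $\gamma_2(v\cdot P)^2$ mixes with the metric term $|J^TP|^2$ in the same degree, which is why the Cartesian and elliptic separations of Section~\ref{sec4} survive. For $k\ge3$ the term $(v\cdot P)^k$ is isolated, and its Levi-Civita expression carries factors like $u^{2k-3}$ times a quadratic form in $P$. Only degenerate $v$ can make that form vanish.

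Geometrically, $v^1\equiv0$ says that $Q_1$ is annihilated by the Euler field $q\cdot\nabla$. So $Q_1$ is homogeneous of degree $0$, i.e.\ $Q_1=f(\varphi)$. The remaining coordinate $Q_2$ must then be the transverse one.

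Next I will feed this into the degree-$2$ component together with the mixed-degree conditions between $H^{[2]}$ and $H^{[k]}$. The goal is to show that the inverse metric $JJ^T$ is diagonal in $(Q_1,Q_2)$, i.e.\ the coordinates are orthogonal. Orthogonal trajectories of the rays $\varphi=\text{const}$ are the circles $r=\text{const}$, so $Q_2=g(r)$ up to reparametrisation. The Stäckel form is then \eqref{eq:HN_polar}, whose only separated constant independent of $H$ is a function of $p_\varphi$; this identifies the first integral as $\mathcal J^2$.

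Finally, $I_1$ and $I_2$ are not functions of $(H_{(N)},\mathcal J)$. For $N\ge3$ they would have to be replaced by their higher-degree analogues, and those are functionally independent of $\mathcal J$. Hence no EPT can separate them.

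The main obstacle is step (c): extracting rigorous alignment of $v$ from the top-degree Levi-Civita identity without assuming a priori the orthogonality of the coordinates. I would first work in the $\mathbf P$-monomial basis, treating $v^1,v^2$ and their derivatives as unknowns, and look for a factor $v^1v^2$ multiplying an expression that cannot vanish identically.

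A secondary subtlety is the mixed-degree terms when several $\gamma_n$ with $n\ge3$ are nonzero. Since only the top degree is needed for the alignment, I expect these to be harmless.
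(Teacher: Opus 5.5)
Your proposal has a genuine gap at step (c), which is the heart of the argument. The Levi-Civita expression satisfies $\mathrm{LC}(F\circ h)=F'(h)^3\,\mathrm{LC}(h)$, because the $F''$ contributions cancel identically. So the top-degree component produced by $\gamma_k(v\cdot P)^k$ is exactly $k^3\gamma_k^3(v\cdot P)^{3k-3}\,\mathrm{LC}(v\cdot P)$, of degree $3k-2$ (not $4N-2$). Its vanishing says only that the linear function $v\cdot P$ is separable, i.e.\ $v^1v^2\,\partial_1\partial_2v^a-v^2\,\partial_2v^1\,\partial_1v^a-v^1\,\partial_1v^2\,\partial_2v^a=0$ for $a=1,2$. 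This holds whenever $v^1=v^1(Q_1)$ and $v^2=v^2(Q_2)$, in particular for the Cartesian frame $Q=q$, $v=q$, where $v^1v^2\neq0$.

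There is therefore no `quadratic form that only degenerate $v$ can kill', and no alignment can be extracted from the top degree. The obstruction sits precisely in the mixed terms you expect to be harmless. For $H=p_1^2+p_2^2+\gamma_3(q\cdot p)^3$ in Cartesian coordinates, the full Levi-Civita expression equals $24\gamma_3(q\cdot p)\,p_1^2p_2^2$. It comes entirely from $\partial^1H\,\partial^2H\,\partial_1\partial_2H$ with both first derivatives taken from the kinetic part. Since the Levi-Civita conditions are cubic in $H$, each homogeneous component collects contributions from all triples $(H^{[a]},H^{[b]},H^{[c]})$ with $a+b+c-2$ fixed. The same problem undermines your plan to read off orthogonality from `the degree-$2$ component'.

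The paper avoids this by using a criterion that is \emph{linear} in $\tilde H$. It takes `separated form' to mean the conformal St\"ackel (Liouville) form $f(Q)\tilde H=\phi_1(Q_1,P_1)+\phi_2(Q_2,P_2)$. This forces $\partial^2\tilde H/\partial P_1\partial P_2=0$, i.e.\ $2(JJ^T)_{12}+v_1v_2\sum_{n\ge2}n(n-1)\gamma_n(v\cdot P)^{n-2}=0$ identically in $P$. Now the homogeneous pieces do decouple:
\begin{itemize}
\item the coefficient of $(v\cdot P)^{k-2}$ with $k\ge3$ gives $v_1v_2=0$;
\item the constant term then gives $(JJ^T)_{12}=0$;
\item since $Jq\neq0$ for $q\neq0$, the open sets $\{v_1\neq0\}$ and $\{v_2\neq0\}$ are disjoint and cover the connected set $\Omega\setminus\{0\}$, so one of them is empty;
\item $v_1\equiv0$ gives $Q_1=Q_1(\varphi)$, and orthogonality then gives $Q_2=Q_2(r)$.
\end{itemize}

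If you want to keep the Levi-Civita setting, you must genuinely analyse the mixed components, such as the one with a single factor from $(v\cdot P)^k$ and two from $H^{[2]}$. That route would prove a stronger statement, since it also covers separations not of Liouville type, e.g.\ $H=P_1P_2$. Otherwise, the missing idea is to replace the cubic Levi-Civita test by the vanishing of the mixed $P$-Hessian of $f\tilde H$.
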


\begin{proof}
\textbf{Step 1: Stäckel condition.}
Under an EPT with Jacobian $J_{ij}=\partial Q_i/\partial q_j$, the canonical lift
$p=J^TP$ gives $q\cdot p = (Jq)\cdot P$. Setting $v=v(q)=Jq$, the Hamiltonian
in the new frame is
\begin{equation}\label{eq:H_EPT_frame}
  \tilde{H} = P^T JJ^T P + \sum_{n=1}^N \gamma_n(v\cdot P)^n.
\end{equation}
For $\tilde{H}$ to take separated form $f(Q_1,Q_2)\tilde{H}=\phi_1(Q_1,P_1)+\phi_2(Q_2,P_2)$
(the most general ansatz guaranteeing additive separability $W=W_1(Q_1)+W_2(Q_2)$ of the Hamilton--Jacobi equation, cf.\ the conformal Stäckel form in \cite{Benenti2016}), differentiating with respect to $P_1$
and $P_2$ gives
\be\label{eq:cross}
  \frac{\partial^2\tilde{H}}{\partial P_1\,\partial P_2}
  = 2(JJ^T)_{12} + v_1 v_2\sum_{n=2}^{N}n(n-1)\gamma_n(v\cdot P)^{n-2} = 0
  \quad\text{(as a polynomial in }P\text{).}
\ee

\textbf{Step 2: Forcing $v_1v_2=0$.}
Let $k=\max\{n\geq 3:\gamma_n\neq 0\}$.  The coefficient of $(v\cdot P)^{k-2}$
(degree $k-2\geq 1$) in \eqref{eq:cross} is $k(k-1)\gamma_k v_1 v_2$, so
\be\label{eq:vv}
  v_1(q)\,v_2(q) = 0 \quad\text{for all }q.
\ee
The degree-$0$ coefficient then gives $(JJ^T)_{12}=\nabla Q_1\cdot\nabla Q_2=0$.

\textbf{Step 3: Polar-type coordinates.}
Let $U_i=\{q:v_i(q)\neq 0\}$.  Since $J(q)$ is invertible and $q\neq 0$, we
have $Jq\neq 0$, so $v_1$ and $v_2$ cannot both vanish: $Z_1\cap Z_2=\emptyset$
on $\Omega\setminus\{0\}$.  By \eqref{eq:vv}, $U_1$ and $U_2$ are disjoint
open sets covering $\Omega\setminus\{0\}$.  Since $\Omega\subset\mathbb{R}^2$
is open and connected, so is $\Omega\setminus\{0\}$; hence one of
$U_1,U_2$ is empty.  Without loss of generality say $v_1\equiv 0$. Then
\begin{equation}\label{eq:Q1_radial_free}
  \nabla Q_1\cdot q =(q_1 \partial_{q_1} + q_2 \partial_{q_2})Q_1= r\,\partial_r Q_1 = 0 \implies \partial_r Q_1=0
  \implies Q_1=Q_1(\varphi).
\end{equation}
Since $\nabla Q_1\parallel\hat{e}_\varphi$, the condition $\nabla Q_1\cdot\nabla Q_2=0$
forces $\nabla Q_2\perp\hat{e}_\varphi$, i.e.\ $\partial_\varphi Q_2=0$, giving
$Q_2=Q_2(r)$.

\textbf{Step 4: The separated integral is $\mathcal{J}^2$.}
The EPT $(Q_1,Q_2)=(f(\varphi),g(r))$ is a reparametrisation of the polar
coordinates $(r,\varphi)$ constructed in Subsection~\ref{subsec4.1}.  As
established there, the Hamiltonian $H_{(N)}$ separates in polar coordinates
with $p_\varphi^2 = \mathcal{J}^2$ as the separated first integral
(equation~\eqref{eq:HN_polar}); any reparametrisation $Q_1=f(\varphi)$,
$Q_2=g(r)$ merely rescales the conjugate momenta
($P_1=p_\varphi/f'(\varphi)$, $P_2=p_r/g'(r)$) and leaves $p_\varphi^2$
unchanged as the invariant first integral.  Since $I_1$ and $I_2$ are
functionally independent of $H_{(N)}$ and $\mathcal{J}^2$, they cannot be
separated by any extended point transformation.
\end{proof}

\begin{rem}[Why $N=2$ is different]\label{rem:N2_different}
For $N=2$, condition~\eqref{eq:cross} reduces to the single equation
$2(JJ^T)_{12}+2\gamma_2 v_1v_2=0$, which can be satisfied with $J$
invertible (e.g.\ polar coordinates, where $v_1=\nabla\varphi\cdot q=0$ and
$(JJ^T)_{12}=0$).  The extra constraint~\eqref{eq:vv} is only forced when a
term $(v\cdot P)^{k-2}$ with $k\geq 3$ must vanish.
\end{rem}

\begin{rem}[Analytic EPTs]\label{rem:analytic_ept}
The topological argument in Step~3 can be simplified if the EPT is
assumed real-analytic (the standard setting in mechanics).  In that case
$v_1,v_2$ are real-analytic, and the conclusion $v_1\equiv 0$ or
$v_2\equiv 0$ follows directly from the identity theorem: if
$v_1\not\equiv 0$ then its zero set has empty interior, so $v_2=0$ on the
open dense set $\{v_1\neq 0\}$, hence $v_2\equiv 0$ by continuity.  This
argument requires neither invertibility of $J$ at $q=0$ nor the
separate connectivity step.
\end{rem}

\section{Conclusions and Perspectives}
\label{conc}
\noindent We have applied the symplectic-Haantjes framework to construct, by an algorithmic two-step procedure, explicit Darboux--Haantjes (DH) separation coordinates for the classical Zernike system $H_{(2)}$, as a first step in the program of applying this method to the full generalized family $H_{(N)}$ \eqref{eq:HN}. The three independent constants of motion --- the angular momentum $\mathcal{J}$ and the two quadratic integrals $I_1$, $I_2$ --- each determine a Haantjes operator and a corresponding canonical transformation to separation coordinates.

\medskip

\noindent\textbf{Polar separation.} The Haantjes operator associated with $\mathcal{J}^2$ (equation~\eqref{eq:haantjes_pphi}) is independent of all parameters $\gamma_n$. The DH algorithm produces the standard polar coordinates $(r, \varphi, p_r, p_\varphi)$, which separate the Hamilton--Jacobi equation for the entire generalized family $\{H_{(N)}\}_{N \geq 1}$ simultaneously. As noted in Remark~\ref{rem:curvature_blind}, the Haantjes construction is insensitive to the curvature $\kappa = -\gamma_2$ of the underlying space: both the flat radial coordinate $r$ and the geodesic distance $\rho$ are admissible characteristic coordinates, and the geometric distinction between them must be supplied by additional input beyond the Haantjes structure.

\medskip

\noindent\textbf{Cartesian-type separations.} The Haantjes operators associated with $I_2$ and $I_1$ (equations~\eqref{eq:haantjes_I2} and \eqref{eq:haantjes_I1_cart}) each admit a Nijenhuis generator whose eigenvalues depend only on the coordinates $q$. This momentum-independence implies that the corresponding DH coordinates are reachable by extended point transformations (EPTs). The explicit transformations \eqref{eq:canonical_N2} and \eqref{eq:canonical_I1} are related by the $q_1 \leftrightarrow q_2$ symmetry of $H_{(2)}$, and both diagonalise the Hamilton--Jacobi equation via separated integrals of the form
\begin{equation}\label{eq:Ik_separated}
  I_k = P_k^2+\sum_{n=1}^{2} \gamma_n (Q_k P_k)^n, \qquad k=1,2.
\end{equation}
As shown in Remark~\ref{rem:geometric_interpretation}, the two transformations correspond geometrically to the System~II (spherical) and System~HII (equidistant) coordinates of Pogosyan, Wolf, and Yakhno~\cite{PWY2017}, here recovered by purely algebraic means.

\medskip

\noindent\textbf{Elliptic separation.} The separation constant $I_e = -\gamma_2 k_1^2\,\mathcal{J}^2 - k_2^2\,I_2$ (where $k_1 = \cos f$, $k_2 = \sin f$ parametrise the interfocal distance) is a linear combination of known integrals \cite[eq.~(72)]{PWY2017}. By linearity of the chain equation~\eqref{eq:chain_equations}, the operator $K_e = -\gamma_2 k_1^2\,K_{\mathcal{J}^2} - k_2^2\,K_{I_2}$ automatically satisfies $K_e^T\,dH_{(2)} = dI_e$.  However, the individual $\omega\mathscr{H}$ structures associated with $\mathcal{J}^2$ and $I_2$ are distinct, and $K_e$ does not belong to either of them: that $K_e$ is itself a Haantjes operator is a non-trivial fact, since the Haantjes condition is nonlinear in $K$.  Proposition~\ref{prop:Ke_haantjes} verifies this by direct computation and establishes the corresponding semisimple Abelian $\omega\mathscr{H}$ structure for the pair $(H_{(2)}, I_e)$.  Since $K_e$ is a linear combination of lift-form operators it is again of lift form ($B=0$), so Proposition~\ref{prop:a_block} applies and the DH coordinates can be reached by an EPT.  We use the Nijenhuis eigenvalues as coordinates (see Remark~\ref{rem:nijenhuis_simplification}); these are non-negative and give the same separation as the $A$-block eigenvalues (which differ by an overall sign, absorbed into the coordinate choice):
\begin{equation}\label{eq:Q12_elliptic}
  Q_{1,2} = \tilde\lambda_{1,2} = \tfrac{1}{2}\bigl(T \pm \sqrt{T^2-4P}\bigr), \qquad \tilde\lambda_1 > \tilde\lambda_2 \geq 0.
\end{equation}
Unlike the Cartesian case, these coordinates are algebraic of degree~$2$ over $\mathbb{Q}(q_1,q_2)$ and cannot be expressed as rational functions of the original coordinates. The level sets of $\tilde\lambda_1$ are confocal conics in the $(q_1,q_2)$-plane, arising as gnomonic projections of spherical conics on $S^2_R$ without any a priori assumption on the coordinate form.

The algebraic degree of the DH coordinate change is reflected in the type of the quantum separated ODE. For the elliptic case, the degree-$2$ coordinate map has a branch point that manifests as a fourth regular singular point, placing the ODE in the Heun class \cite{Ataki18, Ataki19b}. For the polar case, the centrifugal term $p_\varphi^2/r^2$ in the separated Hamiltonian~\eqref{eq:HN_polar} produces a regular singular point at $r = 0$; together with the boundary singularity at $r = R$ the ODE is hypergeometric with three regular singular points \cite{PSAWY2017}. In both cases the ODE type is read off from the explicit separated Hamiltonian produced by the Haantjes algorithm.

\medskip

\noindent\textbf{Obstruction for $N \geq 3$.} Theorem~\ref{thm:no_ept} establishes that for $N \geq 3$ (with $\gamma_N \neq 0$), no extended point transformation can reach the DH separation coordinates of $K_{I_1}$ or $K_{I_2}$. This places the $N \leq 2$ Zernike Hamiltonian in a distinguished position within the generalized family: the EPT structure found in Subsections~\ref{subsec4.2}--\ref{subsec:elliptic} is a special feature of the quadratic case and cannot be extended to higher-order Zernike Hamiltonians. We stress, however, that Theorem~\ref{thm:jacobi_haantjes} continues to guarantee the existence of DH separation coordinates for any $N$; the obstruction concerns only the \emph{class} of the canonical transformation, not the separability itself.

\medskip

\noindent\textbf{Open problems.} Several natural directions remain open.

\begin{itemize}
\item \emph{Separation for higher-order Zernike systems.}
While Theorem~\ref{thm:no_ept} precludes extended-point DH constructions for $N \ge 3$, the possibility of momentum-dependent canonical transformations remains open. For the generalized Zernike family $H_{(N)}$ with $N \geq 3$, which is superintegrable for arbitrary $N$ \cite{AGSH, Gonera22}, a separation has yet to be established. The key structural difficulty is that the relevant integrals of motion are polynomial of degree $N$ in the momenta, so the standard Stäckel--Benenti theory \cite{BenentiFrancaviglia1980, Benenti2016} is inapplicable.

\item \emph{Higher-dimensional Zernike systems.} The Zernike system admits a natural $n$-dimensional generalization $H = |\boldsymbol{p}|^2 + F(\boldsymbol{q}\cdot\boldsymbol{p})$ on $\mathbb{R}^n$, which is maximally superintegrable \cite{Gonera22}. The construction of Haantjes algebras and separation coordinates for these higher-dimensional systems is an open problem.

\item \emph{Haantjes geometry and extended point transformations.} The obstruction result (Theorem~\ref{thm:no_ept}) is formulated for a specific two-dimensional family. We expect analogous obstructions to arise for higher-order integrals of natural Hamiltonians $T + V(q)$ or magnetic Hamiltonians $|\boldsymbol{p} - \boldsymbol{A}(q)|^2 + V(q)$, in any dimension. A general formulation of this obstruction, and a fuller understanding of the relationship between Haantjes structures, Killing tensors, and extended point transformations, deserves further investigation.
\end{itemize}

\section*{Declaration of generative AI and AI-assisted technologies in the writing process}
During the preparation of this work, the authors used ChatGPT by OpenAI and Claude by Anthropic in order to identify grammatical errors and to aid in the overall readability of the text. After using this tool, the output was reviewed and edited as needed and the authors take full responsibility for the content of the publication.


\section*{Acknowledgements}

\phantomsection
\addcontentsline{toc}{section}{Acknowledgements}

\noindent We thank Piergiulio Tempesta for a careful reading and useful comments on an earlier version of this manuscript.

\medskip

\noindent O.K.’s postdoctoral fellowship is financed by the project “Fostering ICMAT’s Strategic Scientific Lines” (202450E223). 
D.L. is supported by HORIZON EUROPE - European Research Council (ERC) - STARTING GRANT 2021 “Hamiltonian Dynamics, Normal Forms and Water Waves” (HamDyWWa) - Project Number: HE$\_$ERC22RMONT$\_$01.
Views and opinions expressed are however those of the authors only and do not necessarily reflect those
of the European Union or the European Research Council. Neither the European Union nor the granting authority can be held responsible for them.
The research of D.L.~has also been partially funded by MUR - Dipartimento di Eccellenza 2023-2027, codice CUP G43C22004580005 - codice progetto   DECC23$\_$012$\_$DIP and partially supported by INFN-CSN4 (Commissione Scientifica Nazionale 4 - Fisica Teorica), MMNLP project. D.L. is a member of GNFM, INdAM.



\begin{thebibliography}{99}
\small

\phantomsection
\addcontentsline{toc}{section}{References}

\bibitem{PWY2017}
G. S. Pogosyan, K. B. Wolf, and A. Yakhno, Superintegrable classical Zernike system, \href{https://doi.org/10.1063/1.4990793}{J. Math. Phys. \textbf{58} (2017) 072901}.

\bibitem{Zernike}
F. von Zernike,   Beugungstheorie des Schneidenverfahrens und seiner verbesserten Form, der Phasenkontrastmethode, \href{https://doi.org/10.1016/S0031-8914(34)80259-5}{Physica \textbf{1} (1934) 689}.

 \bibitem{PSAWY2017}
G. S. Pogosyan, C. Salto-Alegre, K. B. Wolf, A. Yakhno, Quantum superintegrable Zernike system, \href{https://doi.org/10.1063/1.4990794}{J. Math. Phys. \textbf{58} (2017) 072101}.

\bibitem{Das01}
C. Daskaloyannis, Quadratic Poisson algebras of two-dimensional classical superintegrable systems and quadratic associative algebras of quantum superintegrable systems, \href{https://doi.org/10.1063/1.1348026}{J. Math. Phys. \textbf{42} (2001), 1100}.

\bibitem{AGSH}
A. Blasco, I. Gutierrez-Sagredo, F. J. Herranz, Higher-order superintegrable momentum-dependent Hamiltonians on curved spaces from the classical Zernike system, \href{https://doi.org/10.1088/1361-6544/acad5e}{Nonlinearity \textbf{36} (2023) 1143}.

\bibitem{ANY26}
V. Abgaryan, A. Nersessian, V. Yeghikyan, Zernike system revisited: imaginary gauge and Higgs oscillator, \href{https://doi.org/10.1142/S0217732325502220}{Mod. Phys. Lett. A \textbf{41} (2026) 2550222}

  \bibitem{Fordy2018}
A. P. Fordy, Classical and quantum super-integrability: From Lissajous figures to exact solvability, \href{https://doi.org/10.1134/S1063778818060133}{Phys. Atom. Nuclei \textbf{81} (2018) 832}. 

\bibitem{Higgs79}
P. W. Higgs, Dynamical symmetries in a spherical geometry I. \href{https://doi.org/10.1088/0305-4470/12/3/006}{J. Phys. A: Math. Gen. \textbf{12} (1979) 309–323}

\bibitem{bulg}
F. J. Herranz, A. Blasco, R. Campoamor-Stursberg, I. Gutierrez-Sagredo, D. Latini, I. Marquette, Generalized classical and quantum Zernike Hamiltonians, \href{https://doi.org/10.55318/bgjp.2025.52.s1.139}{Bulg. J. Phys. \textbf{52-s1} (2025) 139-145}

\bibitem{nozaleda2022classical}
D.~Reyes~Nozaleda, P.~Tempesta, G.~Tondo, Classical multiseparable Hamiltonian systems, superintegrability and Haantjes geometry, \href{https://doi.org/10.1016/j.cnsns.2021.106021}{Commun. Nonlinear Sci. Numer. Simulat. \textbf{104} (2022) 106021}.

\bibitem{Campoamor13}
R.~Campoamor-Stursberg, J.~F.~Cariñena, M.~F.~Rañada, Higher-order superintegrability of a Holt-related potential, \href{https://doi.org/10.1088/1751-8113/46/43/435202}{J. Phys. A \textbf{46} (2013) 435202}.

\bibitem{tempesta2022haantjes}
P.~Tempesta, G.~Tondo, Haantjes algebras of classical integrable systems, \href{https://doi.org/10.1007/s10231-021-01107-4}{Ann. Mat. Pura Appl. \textbf{201} (2022) 57--90}.

\bibitem{tempesta2021haantjes}
P.~Tempesta, G.~Tondo, Haantjes algebras and diagonalization, \href{https://doi.org/10.1016/j.geomphys.2020.103968}{J. Geom. Phys. \textbf{160} (2021) 103968}.

\bibitem{haantjes1955}
J.~Haantjes, On $X_m$-forming sets of eigenvectors, \href{https://doi.org/10.1016/S1385-7258(55)50021-7}{Indag. Math. \textbf{17} (1955) 158--162}.

\bibitem{magri1984}
F.~Magri, C.~Morosi, A geometrical characterization of integrable Hamiltonian systems through the theory of Poisson--Nijenhuis manifolds, Quaderno S~\textbf{19}, Università di Milano (1984).

\bibitem{magri2003}
F.~Magri, Lenard chains for classical integrable systems, \href{https://doi.org/10.1023/B:TAMP.0000007919.80743.1e}{Theoret. Math. Phys. \textbf{137} (2003) 1716--1722}.

\bibitem{morosi1997}
C.~Morosi, G.~Tondo, Quasi-bi-Hamiltonian systems and separability, \href{https://doi.org/10.1088/0305-4470/30/8/023}{J. Phys. A \textbf{30} (1997) 2799--2806}.

\bibitem{morosi1998}
C.~Morosi, G.~Tondo, On a class of dynamical systems both quasi-bi-Hamiltonian and bi-Hamiltonian, \href{https://doi.org/10.1016/S0375-9601(98)00543-X}{Phys. Lett. A \textbf{247} (1998) 59--64}.

\bibitem{tondo1999}
G.~Tondo, C.~Morosi, Bi-Hamiltonian manifolds, quasi-bi-Hamiltonian systems and separation of variables, \href{https://doi.org/10.1016/S0034-4877(99)80167-0}{Rep. Math. Phys. \textbf{44} (1999) 255--266}.

\bibitem{Reyes2024}
D.~Reyes, P.~Tempesta, G.~Tondo, Partial separability and symplectic-Haantjes manifolds, \href{https://doi.org/10.1007/s10231-024-01462-y}{Ann. Mat. Pura Appl. (4) \textbf{203} (2024) 2677--2710}.

\bibitem{Benenti2016}
S.~Benenti, Separability in Riemannian Manifolds, \href{https://doi.org/10.3842/SIGMA.2016.013}{SIGMA \textbf{12} (2016) 013, 21~pp.}

\bibitem{BenentiFrancaviglia1980}
S.~Benenti, M.~Francaviglia, The theory of separability of the Hamilton--Jacobi equation and its applications to general relativity, in \textit{General Relativity and Gravitation}, Vol.~1, ed. A.~Held, Plenum Press, New York, 1980, pp.~393--439.

\bibitem{MPW}
W.~Miller Jr., S.~Post, and P.~Winternitz, Classical and quantum superintegrability with applications, \href{https://doi.org/10.1088/1751-8113/46/42/423001}{J. Phys. A: Math. Theor. \textbf{46} (2013) 423001}.

\bibitem{Ataki19}
N.~M. Atakishiyev, G.~S. Pogosyan, L.~E. Vicent, K.~B. Wolf, and A.~Yakhno,
Spherical geometry, Zernike's separability, and interbasis expansion coefficients,
\href{https://doi.org/10.1063/1.5099974}{J. Math. Phys. \textbf{60} (2019) 101701}.

\bibitem{Ataki18}
N.~M. Atakishiyev, G.~S. Pogosyan, L.~E. Vicent, K.~B. Wolf, and A.~Yakhno,
Elliptic basis for the Zernike system: Heun function solutions,
\href{https://doi.org/10.1063/1.5030759}{J. Math. Phys. \textbf{59} (2018) 073503}.

\bibitem{Gonera22}
C.~Gonera, J.~Gonera, and P.~Kosiński,
On the generalization of classical Zernike system,
\href{https://doi.org/10.1088/1361-6544/ad1b8d}{Nonlinearity \textbf{37} (2024) 025019}.

\bibitem{Campoamor25}
R.~Campoamor-Stursberg, F.~J.~Herranz, D.~Latini, I.~Marquette, and A.~Blasco,
Generalized quantum Zernike Hamiltonians: polynomial Higgs-type algebras and algebraic derivation of the spectrum,
\href{https://doi.org/10.1088/1361-6544/ae64a2}{Nonlinearity \textbf{39} (2026) 055008}.

\bibitem{Pog17}
G.~S. Pogosyan, K.~B. Wolf, and A.~Yakhno,
New separated polynomial solutions to the Zernike system on the unit disk and interbasis expansion,
\href{https://doi.org/10.1364/JOSAA.34.001844}{J. Opt. Soc. Am. A \textbf{34} (2017) 1844}.

\bibitem{Ataki17}
N.~M. Atakishiyev, G.~S. Pogosyan, L.~E. Vicent, K.~B. Wolf, and A.~Yakhno,
Interbasis expansions in the Zernike system,
\href{https://doi.org/10.1063/1.5000915}{J. Math. Phys. \textbf{58} (2017) 103505}.

\bibitem{Ataki19b}
N.~M. Atakishiyev, G.~S. Pogosyan, L.~E. Vicent, K.~B. Wolf, and A.~Yakhno,
On elliptic trigonometric form of the Zernike system and polar limits,
\href{https://doi.org/10.1088/1402-4896/aafecb}{Phys. Scr. \textbf{94} (2019) 045202}.

\bibitem{KKMbook}
E.~G. Kalnins, J.~M. Kress, and W.~Miller Jr.,
\textit{Separation of Variables and Superintegrability: The Symmetry of Solvable Systems},
IOP Publishing, Bristol, 2018, ISBN 978-0-7503-1314-8.

\bibitem{TTW09}
F.~Tremblay, A.~V. Turbiner, and P.~Winternitz,
An infinite family of solvable and integrable quantum systems on a plane,
\href{https://doi.org/10.1088/1751-8113/42/24/242001}{J. Phys. A: Math. Theor. \textbf{42} (2009) 242001}.

\bibitem{PW10}
S.~Post and P.~Winternitz,
An infinite family of superintegrable deformations of the Coulomb potential,
\href{https://doi.org/10.1088/1751-8113/43/22/222001}{J. Phys. A: Math. Theor. \textbf{43} (2010) 222001}.

\end{thebibliography}
\end{document}